\newtheorem{theorem}{Theorem}
\newtheorem{definition}{Definition}
\newtheorem{lemma}{Lemma}
\newtheorem{corollary}{Corollary}
\newtheorem{remark}{Remark}
\newtheorem{example}{Example}
\newtheorem{proof}{Proof}
\DeclareMathOperator{\diag}{diag}
\begin{document}
%
% paper title
% can use linebreaks \\ within to get better formatting as desired
\title{Non-Fragility and Partial Controllability of Multi-Agent Systems}
%
%
% author names and IEEE memberships
% note positions of commas and nonbreaking spaces ( ~ ) LaTeX will not break
% a structure at a ~ so this keeps an author's name from being broken across
% two lines.
% use \thanks{} to gain access to the first footnote area
% a separate \thanks must be used for each paragraph as LaTeX2e's \thanks
% was not built to handle multiple paragraphs
%

\author{Bin~Zhao,~%\IEEEmembership{Member,~IEEE,}
       Yongqiang~Guan,~%\IEEEmembership{Member,~IEEE}
       and~Long~Wang~%\IEEEmembership{Member,~IEEE}
        %and~Jane~Doe,~\IEEEmembership{Life~Fellow,~IEEE}% <-this % stops a space
\thanks{This work was supported by NSFC (61375120).
%This work is supported by National Natural Science Foundation of China (NNSF) under Grant (No. 61203374, 61375120, 61374062), Beijing Natural Science Foundation (No. 4142031) and Science Foundation of Shandong Province for Distinguished Young Scholars (No. JQ201419).
}
\thanks{B. Zhao and L. Wang are with the Center for Systems and Control, College of Engineering, Peking University, Beijing 100871, China
 (e-mail: bigbin@pku.edu.cn; longwang@pku.edu.cn)}
\thanks{Y. Guan is with the Center for Complex Systems, School of Mechano-Electronic Engineering, Xidian University, Shaanxi, Xi¡¯an, 710071, China
 (e-mail: guan-jq@163.com)
 %e-mail: (see http://www.michaelshell.org/contact.html).
 }% <-this % stops a space
%\thanks{Z. Ji is with College of Automation Engineering, Qingdao University, Qingdao, 266071, China
% (e-mail:jizhijian@pku.edu.cn)}

% <-this % stops a space
%\thanks{Manuscript received April 19, 2005; revised January 11, 2007.}
}

\maketitle

\begin{abstract}
%\boldmath
Controllability of multi-agent systems is determined by the interconnection topologies. In practice, losing agents can change the topologies of multi-agent systems, which may affect the controllability. This paper studies non-fragility of controllability influenced by losing agents. In virtue of the concept of cutsets, necessary and sufficient conditions are established from a graphic perspective, for strong non-fragility and weak non-fragility of controllability, respectively. For multi-agent systems which contain important agents, partial controllability is proposed in terms of the concept of controllable node groups, and necessary and sufficient criteria are established for partial controllability. Moreover, partial controllability preserving problem is proposed. Utilizing the concept of compressed graphs, this problem is transformed into finding the the minimal $\mathbf{\langle s,t\rangle}$ vertex cutsets of the interconnection graph, which has a polynomial-time complexity algorithm for the solution. Several constructive examples illuminate the theoretical results.
\end{abstract}

\begin{IEEEkeywords}
 Non-fragility, cutset, compressed graph, controllable node group, partial controllability.
\end{IEEEkeywords}

% For peer review papers, you can put extra information on the cover
% page as needed:
% \ifCLASSOPTIONpeerreview
% \begin{center} \bfseries EDICS Category: 3-BBND \end{center}
% \fi
%
% For peerreview papers, this IEEEtran command inserts a page break and
% creates the second title. It will be ignored for other modes.
\IEEEpeerreviewmaketitle

\section{Introduction}
In recent years, distributed coordination control of multi-agent systems (MASs) has become an important topic due to the wide connections between MASs and numerous subjects. Researches in this area include consensus problem \cite{Xiao10,Wang10}, formation control \cite{Xiao09}, flocking \cite{Saber06}, controllability and stabilizability \cite{Guan13,Guan16}, etc.

Controllability of MASs was proposed by Tanner for the first time in \cite{Tanner04}, where a necessary and sufficient condition was presented through the Laplacian matrix and the corresponding eigenvalues. Wang et al. studied controllability of MASs with high-order dynamics and generic linear dynamics, and showed that controllability is congruously determined by the interconnection topology \cite{Wang09}. Afterwards, researchers attempted to investigate controllability of MASs from algebraic point of view and graphic perspective. For example, Zhao et al. designed a leader selection algorithm using the algebraic properties of the Laplacian matrix \cite{Zhao16}; Ji et al. proposed a construction procedure for uncontrollable topologies \cite{Ji15}; and Rahmani et al. provided some necessary conditions for controllability utilizing the equitable partition of the interconnection topology \cite{Rahmani09}. In addition, interesting methods were developed for controllability of some special graphs, e.g., tree graphs via analysing the leaders' role with downer branches \cite{Ji12}; paths, cycles and grid graphs via simple rules from number theory \cite{Parlangeli12,Notarstefano13}, etc. A parallel research line in this field is structural controllability of MASs, which was investigated under various models \cite{Zamani009,Partovi10}, etc. The relationship between controllability and structural controllability was studied in \cite{Goldin13}, and controllability improvement for structurally controllable systems was discussed in \cite{Zhao16}.

However, the previous results only consider necessary and/or sufficient conditions for controllability. In practice, for MASs, losing agents is a common phenomenon. For example, in robot systems and vehicle systems, malfunction of some units may appear during the formation process; in biological systems, individuals of a species might be dead during the migration; and in social groups, members of an organization may quit at any time. On the one hand, losing agents and failure of communication links may influence controllability and structural controllability of MASs. Therefore, in \cite{Jafari11}, an optimal selection of the fewest leaders was shown to improve the reliability of MASs in terms of controllability. In \cite{Rahimian13}, the robustness of structural controllability was investigated against the failure of agents and communication links simultaneously. On the other hand, many MASs contain important agents, whose dynamic behavior needs to be controlled properly (e.g., in ant groups, only queens and males have the ability to breed offspring, which are the core parts of the colony, while the ergates are of less importance \cite{Brain}). For an MAS consisting of core agents and less important ones, if we intend to control the important part, it is not essential that all the agents in the system being controllable simultaneously, which derives the concept of partial controllability. Considering of the influence of losing agents, how to preserve partial controllability of MASs becomes a meaningful issue.

Motivated by the above analysis, this paper studies non-fragility of controllability and partial controllability of MASs. The main contributions of this paper are threefold: \\
i) The concept of non-fragility is proposed for controllability. Utilizing the notion of cutsets, necessary and sufficient graphic conditions of strong non-fragility and weak non-fragility of MASs are established, respectively. The difference between structural controllability and non-fragility of controllability is clarified.\\
ii) The concept of controllable node groups is proposed for partial controllability of MASs. Basic criteria of partial controllability are provided. Especially, the correspondence between the controllable node groups in the system and the linearly independent rows of the controllability matrix is clearly revealed, which lays the foundation for the research of partial controllability. \\
iii) The problem of preserving partial controllability against losing agents is proposed. Utilizing the concept of compressed graphs, the problem is equivalently converted into finding the minimal $\langle s,t\rangle$ vertex cutsets of the interconnection graph, which has been proved to have a polynomial-time complexity algorithm for the solution.
%%
%%This paper is organized as follow: Section 2 introduces the basic concepts and mathematical tools for this paper. Main results on partial controllability, as partial controllability criteria, states of the uncontrollable nodes, and the algorithm of leader selection problem, are shown in three subsections of Section 3, respectively. Numerical simulation examples are provided in Section 4 to illustrate theoretical results. Conclusions are drawn in Section 5.

$\mathbf{Notations:}$ $||\cdot||$ represents the Euclidean $2$-norm of a vector. $\diag(a_1,a_2,\cdots,a_n)$ is the diagonal matrix with principal diagonals $a_1,a_2,\cdots,a_n$. The set of $n$-dimensional real vectors is denoted by $\mathbb{R}^n$. $|S|$ represents the cardinality of set $S$. $C_n^p$ denotes the combination number, selecting $p$ items from $n$ item. $S/T$ means the set of all the elements in $S$ but not in $T$.

%%%%%%%%%%%%%%%%%%%%%%%%%%%%%%%%%%%%%%%%%%%%%%%%%%%%%%%%%%%%%%%%%%%%%%%%%%%%%%
\section{Preliminaries}
\subsection{Graph theory}

An undirected graph $\mathbb{G}=(\mathbb{V},\mathbb{E})$ consists of a vertex set $\mathbb{V}=\{v_1,v_2,\cdots,v_n\}$, and an edge set $\mathbb{E}\subseteq \mathbb{V}\times\mathbb{V}$. In graph $\mathbb{G}$, $e_{ij}\in\mathbb{E}$ if and only if $e_{ji}\in\mathbb{E}$, and $v_i,~v_j$ are said to be adjacent with each other. The neighbor set of $v_j$ is denoted by $N_j=\{v_i\in\mathbb{V}|(v_i,v_j)\in\mathbb{E}\}$. The adjacency matrix of $\mathbb{G}$ is $A(\mathbb{G})=[a_{ij}]\in \mathbb{R}^{n\times n}$, where $a_{ij}> 0$ is the weight of edge $e_{ji}$ (as well as $e_{ij}$), and $a_{ij}=0$ if $(v_j,v_i)\notin\mathbb{E}$. The Laplacian matrix of $\mathbb{G}$ is $L(\mathbb{G})=D-A$, $D=\diag(d_1,d_2,\cdots,d_n)$ where $d_k=\sum\limits_{i=1,i\neq k}^n a_{ki},~k=1,2,\cdots,n$. Graph $\mathbb{G}'=(\mathbb{V}',\mathbb{E}')$ is called a subgraph of $\mathbb{G}$ if $\mathbb{V}'\subseteq\mathbb{V}, \mathbb{E}'\subseteq\mathbb{E}$. Removing a vertex $v$ from $\mathbb{G}$ means deleting $v$ and all the edges connected with $v$ in $\mathbb{G}$, and the remaining subgraph is denoted as $\mathbb{G}-v$. Removing a vertex set $\mathbb{V}'$ from $\mathbb{G}$ means deleting all the vertexes in $\mathbb{V}'$ and all the edges connected with any vertex in $\mathbb{V}'$, and denote the remaining subgraph as $\mathbb{G}-\mathbb{V}'$. $\mathbb{G}'=(\mathbb{V}',\mathbb{E}')$ is said to be the induced subgraph of $\mathbb{G}$ by $\mathbb{V}'$, if $\mathbb{G}'=\mathbb{G}-\mathbb{V}/\mathbb{V}'$. A path between $v_i$ and $v_j$ is a subgraph of $\mathbb{G}$, whose vertex set is $\{v_i,v_{k_1},\cdots,v_{k_r},v_j\}$ and the edge set is $\{(v_i,v_{k_1}),(v_{k_1},v_{k_2}),\cdots,(v_{k_{r-1}},v_{k_r}),(v_{k_r},v_j)\}$, where $0\leq r,~1\leq i,j,k_1,\cdots,k_r\leq n$, and no two nodes in $\{v_i,v_{k_1},\cdots,v_{k_r},v_j\}$ are same. For two vertexes $v_i\neq v_j$, we say they are in the same connected component if there exists a path between them, otherwise, they are in different connected components. The number of connected components of $\mathbb{G}$ is denoted as $p(\mathbb{G})$. $\mathbb{G}$ is said to be connected if $p(\mathbb{G})=1$.
% and the corresponding weights of the edges in $\mathbb{G}'$ equals those in $\mathbb{G}$
\begin{definition}
\cite{Zhang13} The distance partition of a connected graph $\mathbb{G}$ relative to node $v$ consists of a series of node sets $D_0,D_1,D_2,\cdots,D_l$, where $D_0=\{v\}$, $D_i=\{w\in \mathbb{V}|~w$ is adjacent with some node in $D_{i-1}$, but not adjacent with any node in $D_0,\cdots,D_{i-2}\}$. $\bigcup\limits_i D_i=\mathbb{V},i=0,1,2,\cdots,l$. Especially, for a connected graph, the minimal $s$ such that $\bigcup\limits_{i=1}^s D_i=\mathbb{V}$ is said to be the length of the graph (relative to $v$).
\end{definition}

\begin{figure}
\begin{minipage}[t]{0.65\linewidth}
\centering
\includegraphics[width=2.7in]{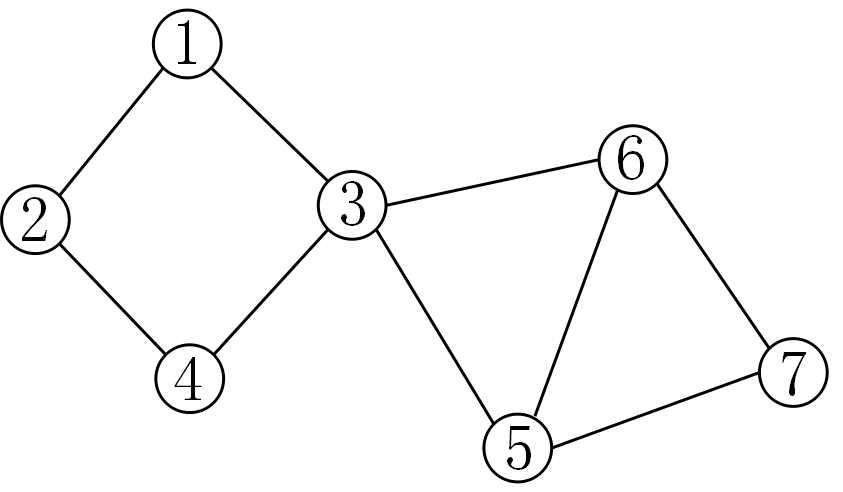}
\caption{A graph consists of 7 nodes.}\label{F1}
\end{minipage}%
~
\begin{minipage}[t]{0.35\linewidth}
\centering
\includegraphics[width=1.5in]{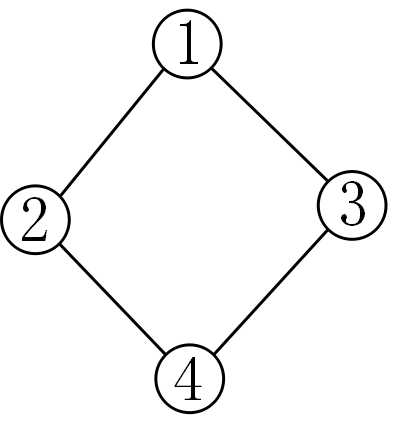}
\caption{The compressed graph of Fig. \ref{F1}.}\label{F10}
\end{minipage}
\end{figure}
%\begin{example}
%For Fig. \ref{F1}, the distance partition relative to $v_1$ is: $D_0=\{v_1\}, D_1=\{v_2,v_3\}, D_2=\{v_4,v_5,v_6\}, D_3=\{v_7\}$. The length of the graph is $3$.
%\end{example}
%
%
%Next we introduce the most important graph theory concept in this paper, called ``cutset''.
\begin{definition}\label{CS}
\cite{Wilson} For a graph $\mathbb{G}=(\mathbb{V},\mathbb{E})$, vertex set $\mathbb{V}'\subset \mathbb{V}$ is said to be a cutset of $\mathbb{G}$, if when all the nodes in $\mathbb{V}'$ are removed, the subgraph $\mathbb{G}-\mathbb{V}'$ contains more connected components than $\mathbb{G}$, i.e., $p(\mathbb{G}-\mathbb{V}')>p(\mathbb{G})$, whereas removing any proper subset $\mathbb{V}''\subset \mathbb{V}'$ will not increase the connected components of $\mathbb{G}$. The minimal cutset is a cutset of $\mathbb{G}$ that contains the fewest vertexes. If a cutset contains only one vertex $v$, we call $v$ a cut vertex.
\end{definition}

Generally speaking, a cutset is a set of nodes, when removed, will lead to more connected components than in the original graph. For example, in Fig. \ref{F1}, removing a single vertex $v_3$ will break the connectivity, therefore $v_3$ is a cut vertex of Fig. \ref{F1}. Apparently, neither of $v_5,v_6$ is a cut vertex, however, removing vertex set $\{v_5,v_6\}$ makes $v_7$ separated from the other nodes, which means $v_5$ and $v_6$ form a cutset of Fig. \ref{F1}.

\begin{definition}\label{stCS}
\cite{Abel} An $\langle s,t\rangle$ vertex cutset is a set of vertexes excluding $s$ and $t$, that all paths connecting $s$ and $t$ pass through at least one vertex in this set. The set is said to be minimal if it contains the fewest vertexes among all the possible $\langle s,t\rangle$ vertex cutsets.
\end{definition}

\begin{remark}
In Fig. \ref{F1}, sets $\{v_2,v_3\}$ and $\{v_2,v_3,v_7\}$ are both $\langle v_1,v_4\rangle$ vertex cutsets of the graph, while $\{v_2,v_3\}$ is a minimal one. Obviously, if a node set $\tilde{\mathbb{V}}$ forms an $\langle s,t\rangle$ vertex cutset of graph $\mathbb{G}$, $\tilde{\mathbb{V}}$ must also contain a cutset of $\mathbb{G}$; conversely, a cutset must be an $\langle s,t\rangle$ vertex cutset for some $s$ and $t$. A detailed difference between Definition \ref{CS} and \ref{stCS} is, a cutset requires that no proper subset of it being a cutset, but an $\langle s,t\rangle$ vertex cutset does not require that.
\end{remark}

\begin{definition}\label{compress}
For a vertex subset $\mathbb{V}_q\subset \mathbb{V}$, if the induced subgraph by $\mathbb{V}_q$ is connected, following the next two steps gets the compressed graph of $\mathbb{G}$ by $\mathbb{V}_q$, denoted as $\mathbb{G}_{\mathbb{V}_q}$.\\
%1. The induced subgraph by $\mathbb{V}_q$ is connected;\\
1. Remove $\mathbb{V}_q$ from $\mathbb{G}$, add a new vertex $v$ into $\mathbb{G}-\mathbb{V}_q$;\\
%there exists a vertex $v$ in $\mathbb{G}_{\mathbb{V}_q}$ such that $\mathbb{G}_{\mathbb{V}_q}-v=\mathbb{G}-\mathbb{V}_q$;\\
2. For any vertex $\tilde{v}\in\mathbb{G}-\mathbb{V}_q$, if $\tilde{v}$ is adjacent with at least one vertex in $\mathbb{V}_q$, connect $\tilde{v}$ with $v$.\\
Here we say $\mathbb{V}_q$ compresses to node $v$.
%$\mathbb{V}_q$ forms a connected node group, and the group compresses to node $v$.
\end{definition}

For Fig. \ref{F1}, compressing $\{v_3,v_5,v_6,v_7\}$ yields Fig. \ref{F10}.

\subsection{Model formulation}

Consider an MAS consisting of $n$ agents with single-integrator dynamics:
\begin{equation}\label{model}
\dot{x}_i=u_i,~i=1,2,\cdots,n,
\end{equation}
where $x_i,u_i\in\mathbb{R}$ represent the state and the control input of agent $v_i$, respectively. Without loss of generality, the leader which can be actuated by external inputs is supposed to be $v_1$. The set of the rest agents, i.e., followers, are denoted as $\mathbb{V}_f=\{v_2,\cdots,v_n\}$. The interconnection graph of system (\ref{model}) is denoted by $\mathbb{G}$, and $\mathbb{G}-v_1$ is said to be the follower subgraph. The control inputs obey the consensus-based protocol: $u_i=\sum\limits_{j \in N_i } a_{ij}( x_j  - x_i ) + u_{o}$, where $u_{o}\in\mathbb{R}$ is the external control on the leader agent $v_1$, and $u_o=0$ when $i=2,3,\cdots,n$.

The compact form of system (\ref{model}) under the protocol is summarized as (\ref{compact}):
\begin{equation}\label{compact}
  \dot{x}=-Lx+bu,
\end{equation}
where $x=(x_1,x_2,\cdots,x_n)^T\in\mathbb{R}^n$ and $u=u_0\in\mathbb{R}$ is the external control. $L$ is the Laplacian matrix of $\mathbb{G}$ and $b=(1,0,\cdots,0)^T\in\mathbb{R}^{n}$. System (\ref{compact}) is said to be controllable if for any initial states $x(0)=x_0\in\mathbb{R}^n$ and any target states $x^*\in\mathbb{R}^n$, there exists a $u=u(t)$ and a finite time instant $T\geq0$, such that $x(T)=x^*$. In the following, we call $[b,-Lb,\cdots,(-L)^{n-1}b]$ the controllability matrix of system (\ref{compact}).

\begin{definition}\label{NF}
If MAS (\ref{compact}) is controllable, for any removal of $p$ followers ($1\leq p\leq n-1$), let $r_p$ be the minimal rank of the controllability matrix of the remaining subsystem. If $r_p=n-p$, the controllability of the original system is said to be $p$-nodes non-fragile ($p$-nodes NF). If for all $1\leq p\leq n-1$, the controllability of system (\ref{compact}) is $p$-nodes NF, we then say the controllability is strongly non-fragile (SNF). Otherwise, if it is $p$-nodes NF for all $1\leq p\leq k$, but is not $(k+1)$-nodes NF, we say the controllability is $k$-weakly non-fragile ($k$-WNF). Especially, if the controllability is not $1$-node NF, it is said to be fragile.
\end{definition}

In this paper, we assume that the leader can not be removed. Without causing misunderstanding, we say system (\ref{compact}) is SNF or $k$-WNF in the following if the controllability of system (\ref{compact}) is SNF or $k$-WNF, respectively. Clearly, SNF is another equivalent expression of $(n-1)$-WNF, and fragile can be treated as $0$-WNF. Be worth mentioning, non-fragility is discussed for controllable MASs. Besides, if an MAS is $k$-WNF, it does not mean that the controllability is not $p$-nodes NF for all $p\geq k+1$, see Example \ref{pNF}.

\begin{figure}
\begin{minipage}[t]{0.6\linewidth}
\centering
\includegraphics[width=2.5in]{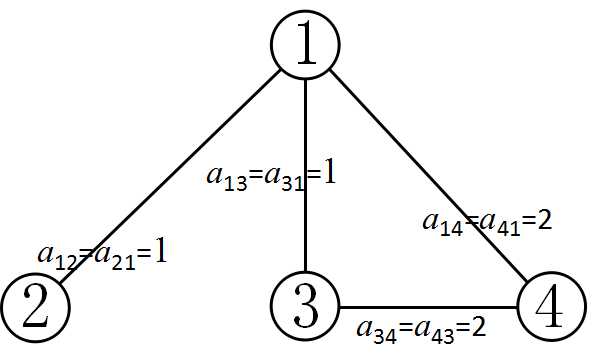}
\caption{A weighted graph with 4 nodes.}\label{F2}
\end{minipage}%
~
\begin{minipage}[t]{0.35\linewidth}
\centering
\includegraphics[width=1.6in]{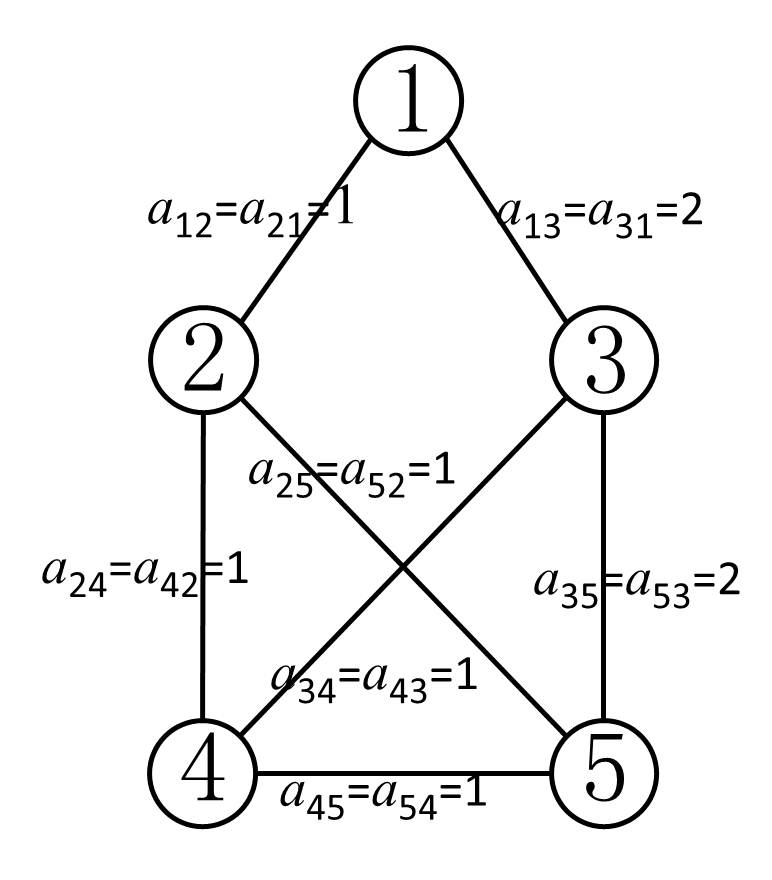}
\caption{A weighted graph with 5 nodes.}\label{F4}
\end{minipage}
\end{figure}

%\begin{figure}
%  \centering
%  % Requires \usepackage{graphicx}
%  \includegraphics[width=0.25\textwidth]{F2.jpg}\\
%  \caption{An illustrative example for Definition \ref{NF}.}\label{F2}
%\end{figure}

%\begin{figure}
%  \centering
%  % Requires \usepackage{graphicx}
%  \includegraphics[width=0.25\textwidth]{F4.jpg}\\
%  \caption{An illustrative example for Theorem \ref{spc}.}\label{F4}
%\end{figure}

\begin{example}\label{pNF}
With the interconnection topology depicted in Fig. \ref{F2}, system (\ref{compact}) is controllable. Since the system will become uncontrollable if $v_4$ is removed, the controllability is fragile. However, removing any two nodes in $\{v_2,v_3,v_4\}$ will make the remaining subsystem controllable.
\end{example}

\section{Main results on fragility of the controllability}

%Considering that losing agents changes the interaction topology, which may influence the controllability of MAS (\ref{compact}), this section discusses non-fragility of the system. Meanwhile the effect of the edge weights on the non-fragility will be analysed.

%
%As is known to all, the controllability of MAS (\ref{compact}) is only determined by the interaction topology and the couplings between the agents. Aiming at the topology change caused by losing agents, this section discusses the non-fragility of system (\ref{compact}), and analyse the influence of the edge weights on the non-fragility.

\subsection{Non-fragility of the controllability}

As defined in Definition \ref{NF}, non-fragility is a measure of controllability that how difficult it would be broken. Intuitively, the most non-fragile MAS is an SNF system, i.e., no matter how many followers are removed, the remaining subsystem is still controllable. Next we show the condition of MASs to be SNF. A basic lemma is needed.

\begin{lemma}\label{star}
For system (\ref{compact}), if the length of $\mathbb{G}$ (relative to the leader) is $1$, and the edges connected with the leader share different weights, then, there exists an $M>0$, such that when the weights of the edges in the follower subgraph are not larger than $M$, the system is controllable.
\end{lemma}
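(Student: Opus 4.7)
The plan is to reduce to a base case in which no edges exist among the followers (so the interconnection graph is a star centered at the leader) and then invoke continuity in the follower-edge weights.

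In the base case, let $w_1,\ldots,w_{n-1}$ denote the leader-to-follower edge weights, which are all distinct by hypothesis. The Laplacian is a bordered diagonal matrix: the leader row and column couple through entries $-w_i$ to the diagonal block $\diag(w_1,\ldots,w_{n-1})$ on the follower block. I would solve $Lv=\lambda v$ componentwise. The $i$-th follower row yields $v_i = w_i v_0/(w_i-\lambda)$ whenever $\lambda\neq w_i$, and substituting back into the leader row reduces everything to the secular equation $\sum_{i=1}^{n-1} w_i/(\lambda-w_i) = 1$, together with the separate Laplacian eigenvalue $\lambda=0$ (eigenvector $\mathbf{1}$). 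A monotonicity argument---the derivative of the left-hand side is strictly negative on each interval between consecutive $w_i$'s and on $(\max_i w_i,\infty)$, while the function is strictly negative for $\lambda<\min_i w_i$---then yields exactly one eigenvalue in each such interval, producing $n$ simple eigenvalues in total. Crucially, I would also verify that every eigenvector has a nonzero leader coordinate $v_0$: if $v_0=0$, the follower equations combined with the distinctness of the $w_i$ force $v_i=0$ for all $i$ except possibly one $j$ with $\lambda=w_j$, and the constraint $\sum_k w_k v_k = 0$ coming from the leader row then forces $v_j=0$ as well, a contradiction. By the PBH test with $b=e_1$, this yields controllability of the star.

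The second step is a perturbation argument. The determinant of the controllability matrix $[b,-Lb,\ldots,(-L)^{n-1}b]$ is a polynomial in the entries of $L$, and hence---with the leader-edge weights held fixed---a polynomial in the follower-edge weights. Since this polynomial is nonzero at the origin (the base case above) and polynomials are continuous, it remains nonzero on some open neighborhood of the origin in weight space. Choosing $M>0$ small enough that $[0,M]^{|\mathbb{E}_f|}$ lies inside this neighborhood (with $\mathbb{E}_f$ the edge set of the follower subgraph) gives the claimed bound: for every weight assignment in that box the controllability matrix remains of full rank.

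I expect the main obstacle to be the eigenanalysis of the star Laplacian, in particular the PBH-style verification that no eigenvector has a vanishing leader component. Once this is in hand, the continuity extension from the star to nearby perturbations is essentially automatic, as is the counting that accounts for all $n$ eigenvalues across the regions carved out by the $w_i$'s.
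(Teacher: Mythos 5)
Your proposal follows essentially the same route as the paper's proof: regard the determinant of the controllability matrix as a polynomial in the follower-edge weights, note that it is nonzero when those weights vanish (the star configuration with distinct leader-edge weights), and conclude by continuity that it stays nonzero on a small box $[0,M]^{|\mathbb{E}_f|}$. The only difference is that the paper merely \emph{asserts} the nonvanishing limit $\Delta\neq 0$ for the star, whereas you actually prove this base case via the secular equation $\sum_i w_i/(\lambda-w_i)=1$ and the PBH test (showing all eigenvalues are simple and no eigenvector has zero leader coordinate) --- a correct argument that fills in the step the paper leaves implicit.
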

\begin{proof}
Suppose there are $s\geq0$ edges in the follower subgraph, with the weights $w_1,w_2,\cdots,w_s$. The determinant of the controllability matrix is mathematically a function of $w_1,\cdots,w_s$, denoted as $f(w_1,\cdots,w_s)$. Since the weights of the edges connected with the leader are all different, we get $\lim\limits_{w_1,\cdots,w_s\rightarrow0^+}f(w_1,\cdots,w_s)=\Delta\neq 0$ (without loss of generality, assume $\Delta>0$ in the following). Therefore, there exists an $M>0$, such that when $w_i>0,~i=1,2,\cdots,s$, $f(w_1,\cdots,w_s)>0$, i.e., the system is controllable for these choices of edge weights.
\end{proof}

\begin{theorem}\label{SNF}
For system (\ref{compact}), there exist a set of edge weights to make the system SNF if and only if the length of $\mathbb{G}$ (relative to the leader) is $1$.
\end{theorem}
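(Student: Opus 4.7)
The plan is to treat the two directions separately, leveraging Lemma \ref{star} for sufficiency and a direct disconnection argument for necessity. Observe first that the length of $\mathbb{G}$ relative to $v_1$ is $1$ iff every follower is adjacent to $v_1$, i.e., $\mathbb{G}$ contains a spanning star centered at $v_1$ (with arbitrary additional follower--follower edges).

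\textbf{Sufficiency.} Assume the length relative to $v_1$ equals $1$. First I would assign $n-1$ pairwise distinct positive weights to the $n-1$ edges incident to $v_1$. The key observation is that for any follower subset $\mathbb{V}_p\subset\mathbb{V}_f$ with $|\mathbb{V}_p|=p$, $1\le p\le n-1$, the reduced graph $\mathbb{G}-\mathbb{V}_p$ (which still contains $v_1$) again has length $1$ relative to $v_1$, and it inherits its leader-incident weights from $\mathbb{G}$, so those weights remain pairwise distinct. Applying Lemma \ref{star} to $\mathbb{G}-\mathbb{V}_p$ produces a threshold $M_{\mathbb{V}_p}>0$ such that whenever every follower--subgraph weight in $\mathbb{G}-\mathbb{V}_p$ is at most $M_{\mathbb{V}_p}$, the corresponding reduced subsystem is controllable. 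Since there are only finitely many choices of $\mathbb{V}_p$, I would set
\[
M \;=\; \min_{\mathbb{V}_p\subset \mathbb{V}_f}\, M_{\mathbb{V}_p},
\]
and then pick every follower--subgraph edge weight of $\mathbb{G}$ in $(0,M]$. With this single weight assignment, both $\mathbb{G}$ itself (take $p=0$ after handling it separately, or extend the lemma to this case) and every $\mathbb{G}-\mathbb{V}_p$ are controllable simultaneously, which is exactly SNF.

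\textbf{Necessity.} Suppose the length relative to $v_1$ is at least $2$. Then some follower $v_k$ is not adjacent to $v_1$. Regardless of the edge weights in $\mathbb{G}$, remove the $n-2$ followers $\mathbb{V}_f\setminus\{v_k\}$. The reduced graph contains only $v_1$ and $v_k$ and has no edge between them, so its Laplacian is the zero matrix and $b=(1,0)^T$; the controllability matrix becomes $[\,b,\;0\,]$, which has rank $1<2$. Thus the reduced subsystem is uncontrollable, so the controllability is not $(n-2)$-nodes NF, and hence not SNF for any choice of weights.

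\textbf{Where the difficulty lies.} The direction that needs the most care is sufficiency: one must choose a \emph{single} weight assignment that simultaneously makes all $\sum_{p=0}^{n-1}\binom{n-1}{p}$ induced subsystems controllable. The cleanest route is the uniform-threshold argument above, which works because the family of subgraphs is finite and Lemma \ref{star} applies uniformly once the leader-incident weights are distinct, so taking the minimum of finitely many thresholds suffices. Necessity is comparatively straightforward: isolating a non-neighbor of $v_1$ by removing all other followers trivially destroys controllability, independently of the weights.
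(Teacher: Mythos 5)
Your proposal is correct and follows essentially the same route as the paper: necessity by removing all followers except a non-neighbor of the leader to disconnect it, and sufficiency by assigning pairwise distinct weights to the leader-incident edges, applying Lemma \ref{star} to each of the finitely many induced subgraphs (including $\mathbb{G}$ itself, to which the lemma applies directly), and taking the minimum of the resulting thresholds as a uniform bound on the follower--follower weights.
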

\begin{proof}
Necessity: If the length of $\mathbb{G}$ is not $1$, there exists a follower node $v$ which is not a neighbor of the leader. When all the followers are removed except $v$, node $v$ can not get information from the leader, which obviously makes the system uncontrollable.\\
%This means the system is not SNF. If two followers are adjacent with the leader by the same edge weight, when all the other followers are removed, the subsystem is also uncontrollable.\\
Sufficiency: Considering that each follower is adjacent with the leader, assign different weights on $e_{12},\cdots,e_{1n}$. According to Lemma \ref{star}, each subgraph $\mathbb{G}_i$ corresponds to an $M_i>0$, such that when all the weights of the edges connecting two followers are not larger than $M_i$, the subsystem with the interconnection topology $\mathbb{G}_i$ is controllable, $i=1,2,\cdots,2^{n-1}$. Let $M=\min\{M_1,M_2,\cdots,M_{2^{n-1}}\}$, it is obvious that after removing any group of followers, the remaining subsystem is still controllable. This means system (\ref{compact}) is SNF.
\end{proof}

Theorem \ref{SNF} provides a necessary and sufficient graphic condition for the existence of weight assignments to make system (\ref{compact}) SNF. It can be further proved, there exist a set of edge weights for the follower subgraph to make the system SNF, when and only when the edges share different weights if they are connected with the leader. For the graphs whose length is more than $1$, we generalize the result to WNF controllability, and a property of minimal cutsets is shown in advance. Especially, the concept of (minimal) cutset(s) in the following should not contain the leader. For example, the minimal cutsets in Fig. \ref{F10} are $\{v_1,v_4\}$ and $\{v_2,v_3\}$. However, if system (\ref{compact}) is with the interconnection topology depicted as Fig. \ref{F10}, the minimal cutset of the topology is only $\{v_2,v_3\}$. For simplicity, the (minimal) cutset(s) of the topology is still said to be the (minimal) cutset(s) of $\mathbb{G}$.

\begin{lemma}\label{kWNF}
For system (\ref{compact}), if the minimal cutset(s) of $\mathbb{G}$ contains $k+1$ followers, then, there exist a set of edge weights such that the system is $k$-WNF.
\end{lemma}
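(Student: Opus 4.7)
The plan is to establish the two halves of $k$-WNF separately: the upper bound, that no weight assignment can make the system $(k+1)$-NF, and the lower bound, that there exists a weight assignment making the system $p$-NF for every $p\leq k$. The upper bound is essentially topological and exploits a fixed minimum cutset; the lower bound is a polynomial genericity argument over the edge weights.

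For the upper bound, I would pick a minimum cutset $\mathbb{V}_c$ with $|\mathbb{V}_c|=k+1$. By Definition \ref{CS}, $\mathbb{G}-\mathbb{V}_c$ has strictly more connected components than $\mathbb{G}$; since cutsets exclude the leader, the leader lies in one component and at least one follower $v^{*}$ lies in a different one. In the remaining subsystem the Laplacian is block diagonal relative to this split, with $b$ supported entirely on the leader's block, so every column $(-L)^{j}b$ of the controllability matrix vanishes on the coordinates of $v^{*}$'s block. Hence the rank is strictly less than $n-(k+1)$ for every assignment of edge weights, which rules out $(k+1)$-NF regardless of how the weights are chosen.

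For the lower bound, note that if $|S|\leq k$ then $|S|$ is strictly below the minimum cutset size, so $\mathbb{G}-S$ remains connected and still contains the leader. Let $f_S(w)$ denote the determinant of the controllability matrix of the subsystem on $\mathbb{G}-S$, viewed as a polynomial in the edge weights $w$. The crux is $f_S\not\equiv 0$: I would choose a spanning tree of $\mathbb{G}-S$ rooted at the leader, assign it generic positive weights, and formally set every non-tree edge weight to zero; by the classical tree controllability result (in the same spirit as Lemma \ref{star}, but for trees rather than stars, cf.\ \cite{Ji12}), the resulting tree-induced system is controllable, so $f_S$ does not vanish at that point and hence is not identically zero. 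Since there are only finitely many subsets $S$ with $|S|\leq k$, the product $F(w)=\prod_{|S|\leq k}f_S(w)$ is a nonzero polynomial whose zero set has Lebesgue measure zero in the positive orthant of weight space; any $w^{*}$ outside this zero set simultaneously makes every $f_S$ nonzero, including $f_\emptyset$, so with these weights the system is controllable and $p$-NF for every $p\leq k$.

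The main obstacle I expect is the claim $f_S\not\equiv 0$, i.e., that every connected leader-follower graph admits at least one controllable weight assignment. This is the only non-trivial input; once it is in hand, the block-diagonal argument for the upper bound and the Zariski-density conclusion for the lower bound combine routinely to give the $k$-WNF conclusion.
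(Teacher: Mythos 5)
Your overall architecture is the same as the paper's: for every set $S$ of at most $k$ followers the remaining graph stays connected, the uncontrollable weight assignments for each such subsystem form a measure-zero set, and a finite union (your product polynomial $F$) yields a single weight vector that handles all removals, including $S=\emptyset$, simultaneously. The paper obtains the key non-vanishing fact by citing Theorem 3 of \cite{Goldin13}. Your explicit ``upper bound'' (removing the minimal cutset block-diagonalizes the Laplacian, so the rows of the component not containing the leader are zero in the controllability matrix and the system can never be $(k+1)$-nodes NF) is a welcome addition: the paper leaves this half of Definition \ref{NF} implicit.

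The gap is exactly where you flagged it: the claim $f_S\not\equiv 0$. Your reduction to a spanning tree by zeroing out non-tree weights is legitimate (evaluating the polynomial on a boundary face of the orthant is allowed for proving non-vanishing), but you then appeal to a ``classical tree controllability result, cf.~\cite{Ji12}.'' That reference, and the downer-branch literature generally, treats \emph{unweighted} trees, for which single-leader controllability routinely fails --- e.g., a star whose center is the leader is uncontrollable when the edge weights are equal, which is precisely why Lemma \ref{star} must assume distinct weights on the leader's edges. So the statement you actually need --- every tree admits (indeed generically admits) a positive weight assignment making it controllable from an arbitrary fixed leader --- is not contained in your citation, and it is not a formality: it is essentially the weight-controllability theorem itself, specialized to trees. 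You would either have to prove it (e.g., a PBH argument on the grounded Laplacians of the branches at the leader, showing that for generic weights no Laplacian eigenvector vanishes at the leader), or simply cite \cite{Goldin13} for connected graphs, as the paper does --- in which case the spanning-tree detour becomes unnecessary. As written, the crux of the lemma is black-boxed behind a reference that does not establish it.
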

\begin{proof}
If the minimal cutset(s) of $\mathbb{G}$ contains $k+1$ followers, for any $0\leq p\leq k$, any choice of $p$ followers removed from $\mathbb{G}$, the remaining subgraph is connected. According to \cite{Goldin13}, for the remaining subgraph, the set of edge weights those make the subsystem uncontrollable has Lebesgue measure zero in $\mathbb{R}^{|\mathbb{E}|-p}$. Therefore, for the original graph, the set of edge weights to make the subsystem uncontrollable, denoted as $W_{i}$, also has Lebesgue measure zero in $\mathbb{R}^{|\mathbb{E}|}$, $i=1,2,\cdots,C_n^p$. Since $\sum\limits_{p=1}^{k}C_n^p<2^{|\mathbb{E}|}<+\infty$, we get the set $W_u=\bigcup\limits_{p=1}^{k}\bigcup\limits_{i=1}^{C_n^p} W_i$ has Lebesgue measure zero in $\mathbb{R}^{|\mathbb{E}|}$, where $W_u$ is the set of edge weights to make the system uncontrollable by removing any choice of no more than $k$ followers. This implies that the system is $k$-WNF.
\end{proof}

\begin{corollary}
For system (\ref{compact}), the next two statements hold for almost all sets of edge weights\footnote{The edge weight sets have Lebesgue measure $1$ in $\mathbb{R}^{|\mathbb{E}|}$.}:\\
1. The system is controllable.\\
2. The system remains controllable after removing any set of followers, unless the removed followers contain a cutset of $\mathbb{G}$.
\end{corollary}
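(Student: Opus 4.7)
The plan is to deduce both claims from a single genericity argument that parallels the proof of Lemma~\ref{kWNF}. Since statement~1 is the special case of statement~2 obtained by removing the empty set of followers, it suffices to prove statement~2. The high-level idea is that for every follower subset $S$ not containing a cutset, the removal leaves a connected interconnection graph; by~\cite{Goldin13}, the edge-weight assignments that render the reduced subsystem uncontrollable form a Lebesgue measure-zero set, and taking a finite union over all such $S$ still yields a measure-zero exceptional set.

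The first step is a small graph-theoretic lemma: if a follower set $S$ contains no cutset of $\mathbb{G}$, then $\mathbb{G}-S$ is connected. Indeed, were $\mathbb{G}-S$ disconnected, one could choose a minimal $S'\subseteq S$ satisfying $p(\mathbb{G}-S')>p(\mathbb{G})$; by Definition~\ref{CS}, $S'$ would be a cutset contained in $S$, contradicting the hypothesis. (The converse need not hold in general---deleting an entire branch together with a cut vertex may leave the remainder connected---but only this direction is needed for the corollary.)

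With this reduction in hand, I would fix such an $S$ and invoke the generic-controllability result of~\cite{Goldin13} exactly as in Lemma~\ref{kWNF}: the set $W_S\subset\mathbb{R}^{|\mathbb{E}(\mathbb{G}-S)|}$ of reduced edge-weight assignments making the subsystem on $\mathbb{G}-S$ uncontrollable has Lebesgue measure zero. Pulling this set back to $\mathbb{R}^{|\mathbb{E}|}$ by letting the weights of the deleted edges vary freely yields a corresponding set $\widetilde{W}_S$ of Lebesgue measure zero in the full weight space. Since there are fewer than $2^{n-1}$ such follower subsets, the union $W_u:=\bigcup_{S}\widetilde{W}_S$ over all $S$ not containing a cutset is a finite union of measure-zero sets and so has Lebesgue measure zero; on its full-measure complement in $\mathbb{R}^{|\mathbb{E}|}$, both statements~1 and~2 hold simultaneously. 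The argument is essentially bookkeeping, so the main points requiring care are the one-directional graph-theoretic equivalence in the first step and the clean lifting of measure-zero sets across the projection onto the deleted-edge coordinates; no new conceptual machinery beyond Lemma~\ref{kWNF} and Definition~\ref{CS} is required.
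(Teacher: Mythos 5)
Your proposal is correct and follows essentially the same route as the paper: the paper's own proof is a one-line appeal to Theorem 3 of \cite{Goldin13}, and your argument simply spells out the details that citation leaves implicit (the observation that a follower set containing no cutset leaves $\mathbb{G}$ connected, the measure-zero lifting to $\mathbb{R}^{|\mathbb{E}|}$, and the finite union over subsets), exactly in the style of Lemma~\ref{kWNF}. Your one-directional reading of the ``unless'' clause is also the right one, since removing a set that contains a cutset can still leave a connected (hence generically controllable) graph.
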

\begin{proof}
The result can be derived directly from Theorem 3 in \cite{Goldin13}.
\end{proof}

\begin{theorem}\label{WNF}
There exist a set of edge weights to make system (\ref{compact}) $k$-WNF if and only if the minimal cutset(s) of $\mathbb{G}$ contains $k+1$ followers.
\end{theorem}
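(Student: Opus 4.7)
The plan is to prove the iff by splitting sufficiency from necessity and routing both through the minimal-cutset size of $\mathbb{G}$, reusing the measure-theoretic machinery already set up in Lemma \ref{kWNF}.

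For sufficiency, I would assume the minimal cutset of $\mathbb{G}$ contains $k+1$ followers. Lemma \ref{kWNF} already produces a set of weight assignments of full Lebesgue measure in $\mathbb{R}^{|\mathbb{E}|}$ under which every subsystem obtained by removing at most $k$ followers is controllable, i.e.\ $p$-nodes NF for each $p\le k$. What is still needed for $k$-WNF is the failure of $(k+1)$-nodes NF, and I would supply this by picking the single $(k+1)$-follower removal given by the minimal cutset itself: by Definition \ref{CS} it strictly increases the number of connected components of $\mathbb{G}$, so some follower sits in a component disjoint from the leader $v_1$, and because the external input $u_0$ enters only at $v_1$, that follower's row of the sub-controllability matrix is identically zero for every weight choice. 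The rank therefore drops below $n-(k+1)$, which combined with the full-measure $p$-NF set yields weights making the system exactly $k$-WNF.

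For necessity, I would suppose some weight assignment makes the system $k$-WNF and let $m$ denote the size of the minimal cutset of $\mathbb{G}$. I would establish $m=k+1$ by ruling out $m\le k$ and $m\ge k+2$. The case $m\le k$ uses the same disconnection argument: the minimal cutset has size $\le k$ and its removal destroys controllability for every weight choice, contradicting $p$-NF at $p=m\le k$. For the case $m\ge k+2$, no $(k+1)$-follower set is a cutset, so every removal of at most $k+1$ followers keeps the graph connected, and the Goldin-type structural-controllability argument invoked in Lemma \ref{kWNF} makes the system $(k+1)$-nodes NF, contradicting the ``not $(k+1)$-NF'' clause in the definition of $k$-WNF.

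The subtle step, and what I expect to be the main obstacle, is the $m\ge k+2$ subcase of necessity: in principle a measure-zero weight choice could make one specific $(k+1)$-removal degenerate while still preserving the $p$-NF conditions for $p\le k$, so the mere almost-everywhere statement is not automatically enough. I would resolve this by noting that the weight set producing uncontrollability after any fixed $(k+1)$-removal on a connected reduced graph is the zero locus of a nontrivial polynomial in the edge weights, hence Lebesgue-null; the finite union over all such removals remains Lebesgue-null and has empty interior, so the full-measure $p$-NF set from Lemma \ref{kWNF} intersects the complement of every such bad locus, forcing $(k+1)$-NF there and eliminating the $m\ge k+2$ possibility. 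Routing the whole argument through this genericity bookkeeping, so that sufficiency and necessity both live in the same measure-theoretic framework, is what I would regard as the key technical hurdle.
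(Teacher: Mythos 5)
Your sufficiency half is fine, and in fact more complete than the paper's own treatment: the paper disposes of sufficiency by simply citing Lemma \ref{kWNF}, whose proof only establishes $p$-nodes NF for $p\le k$, whereas you explicitly supply the missing ``not $(k+1)$-nodes NF'' clause by removing the minimal cutset itself and observing that a separated follower's row of the controllability matrix is identically zero. The $m\le k$ half of your necessity argument is also correct, and matches the spirit of the paper's.

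The genuine gap is exactly where you predicted it, in the $m\ge k+2$ case, and your proposed patch does not close it because it proves the wrong quantifier. Your genericity argument shows that almost every weight assignment is $(k+1)$-nodes NF, hence that almost every assignment fails to be $k$-WNF. But the hypothesis of necessity is only that \emph{some} assignment is $k$-WNF, and nothing prevents that assignment from lying inside the Lebesgue-null bad locus; intersecting full-measure sets can never exclude a counterexample living on a null set. This is not a technicality: the paper's own Example \ref{pNF} (Fig. \ref{F2}) realizes it. There every follower is adjacent to the leader (this is forced by the claim that removing any two of $v_2,v_3,v_4$ leaves a controllable two-node system), so no set of followers is a cutset at all; yet the specific weights of Fig. \ref{F2} make the system fragile, i.e., $0$-WNF, because removing $v_4$ creates a weight symmetry rather than a disconnection. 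So the implication ``minimal cutset has more than $k+1$ followers $\Rightarrow$ no weights are $k$-WNF'' is false as stated, and no measure-theoretic refinement can rescue it. You are in good company: the paper's necessity proof commits the same error, invoking Lemma \ref{kWNF} (an existence/genericity statement) to draw a for-all-weights conclusion. The theorem holds only under a generic reading, e.g., ``for almost all weight assignments the system is $k$-WNF if and only if the minimal cutset contains $k+1$ followers,'' and under that re-quantification both your argument and the paper's go through.
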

\begin{proof}
Necessity: If the minimal cutset(s) of $\mathbb{G}$ contains $k+2$ (or more) followers, by Lemma \ref{kWNF}, the system is (at least) $(k+1)$-WNF, which contradicts the $k$-WNF controllability. If the minimal cutset of $\mathbb{G}$ only contains $k$ (or less) followers, by Lemma \ref{kWNF}, the system is (at most) $(k-1)$-WNF, which also contradicts the $k$-WNF controllability. This means the minimal cutset(s) of the interconnection topology contains exactly $k+1$ followers for a $k$-WNF controllable MAS.\\
Sufficiency: This can be directly derived from Lemma \ref{kWNF}.
\end{proof}

\begin{corollary}
The following assertions hold:\\
1. System (\ref{compact}) is fragile for all sets of edge weights if and only if the interconnection topology contains at least one cut vertex as a follower.\\
2. If system (\ref{compact}) is $k$-WNF, $D_1$ of the distance partition of $\mathbb{G}$ (relative to the leader) contains at least $k+1$ followers.
\end{corollary}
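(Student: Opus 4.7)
The plan is to lean on Theorem \ref{WNF} throughout, together with the simple observation that once a follower becomes disconnected from the leader in the interconnection graph, the controllability matrix of the resulting subsystem must lose rank regardless of edge weights. For assertion 1 I will handle the two directions separately; for assertion 2 I will realize $D_1$ as an $\langle v_1,v\rangle$ vertex cutset for a far-away follower $v$ and then use the remark following Definition \ref{stCS} to pull out an ordinary cutset contained inside $D_1$.

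For the ``if'' direction of assertion 1, suppose $v$ is a follower cut vertex. Removing $v$ partitions $\mathbb{G}$ into at least two components, with the leader in one of them and some other follower $w$ in another. A block-diagonal decomposition of the reduced Laplacian with respect to these components shows that the Krylov subspace $\Span\{b,Lb,L^2b,\ldots\}$ is confined to the leader's component, whose size is at most $n-2$; hence the reduced controllability matrix has rank at most $n-2$, so $r_1<n-1$ and the system cannot be $1$-nodes NF under any edge-weight assignment. For the converse I contrapose: if there is no follower cut vertex, then either $\mathbb{G}$ admits no follower cutset at all, in which case the length of $\mathbb{G}$ relative to the leader equals $1$ and Theorem \ref{SNF} yields weights making the system SNF; or every follower cutset has size at least two, and Theorem \ref{WNF} yields weights making the system $1$-WNF. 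Either way some weight choice makes the system non-fragile.

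For assertion 2, assume the system is $k$-WNF. Then by Theorem \ref{WNF} the minimal cutset of $\mathbb{G}$ consists of exactly $k+1$ followers, and in particular $k\leq n-2$. If the length of $\mathbb{G}$ relative to $v_1$ equals $1$, then $D_1=\mathbb{V}_f$, so $|D_1|=n-1\geq k+1$ is immediate. Otherwise there is a follower $v\in D_i$ for some $i\geq 2$; by the very definition of the distance partition every path from $v_1$ to $v$ must first enter $D_1$, so $D_1$ is a $\langle v_1,v\rangle$ vertex cutset (and clearly excludes both endpoints). The remark following Definition \ref{stCS} then forces $D_1$ to contain a cutset of $\mathbb{G}$, whose cardinality is at least the minimal-cutset size $k+1$; thus $|D_1|\geq k+1$.

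The main technical point that needs care is the rank-drop claim used for the sufficiency of assertion 1: one must verify, without invoking any property of the edge weights in the follower subgraph, that the block structure decoupling the leader's component from the rest of the reduced Laplacian keeps the entire Krylov sequence inside that component. This is a routine but non-trivial block-triangular argument that should be spelled out. A secondary cosmetic point is the distinction between ``no cutset of $\mathbb{G}$'' and ``no follower cutset of $\mathbb{G}$'': only the latter is the relevant edge case for assertion 1, and it coincides precisely with length $=1$, where Theorem \ref{SNF} applies.
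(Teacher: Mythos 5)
Your proof is correct, but it takes a noticeably different route from the paper's, which disposes of both assertions in a few lines by leaning entirely on the already-established theorems. For assertion 1 the paper simply observes that it ``follows directly from Definition \ref{NF} and Theorem \ref{WNF}'': a follower cut vertex means the minimal cutset has size $1$, so by the necessity half of Theorem \ref{WNF} no weight assignment can achieve $k$-WNF for any $k\geq 1$, i.e., every weight assignment leaves the system fragile. You instead re-derive the ``if'' direction from first principles with the block-diagonal Krylov argument (disconnected reduced Laplacian $\Rightarrow$ rank $\leq n-2$ for every weight choice); this is more self-contained and makes explicit the structural mechanism that the paper leaves implicit inside Theorem \ref{WNF}. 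Your contrapositive for the ``only if'' direction, with the case split ``no follower cutset $\Leftrightarrow$ length $1$'' (handled by Theorem \ref{SNF}) versus ``minimal cutset size $\geq 2$'' (handled by Theorem \ref{WNF}), is also a genuine improvement in rigor: Theorem \ref{WNF} presupposes that a follower cutset exists, an edge case the paper's one-line citation glosses over. For assertion 2 the paper argues dynamically by contradiction --- if $|D_1|\leq k$, removing $D_1$ disconnects the remaining followers from the leader, contradicting the $p$-nodes NF requirement of $k$-WNF for $p=|D_1|$ --- whereas you argue statically: $D_1$ is a $\langle v_1,v\rangle$ vertex cutset for any far follower $v$, hence contains a cutset by the remark after Definition \ref{stCS}, hence $|D_1|\geq k+1$ by minimality combined with Theorem \ref{WNF}. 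Both are valid; the paper's version does not actually need the cutset-containment remark, while yours does not need the explicit removal-and-disconnection argument. One small imprecision on your side: in the second case of assertion 1 you say Theorem \ref{WNF} yields weights making the system ``$1$-WNF,'' but if the minimal cutset has size $m+1>2$ the theorem yields $m$-WNF (which by definition excludes $1$-WNF); what you actually need, and what does follow, is only that the resulting system is $1$-node NF and hence not fragile.
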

\begin{proof}
1. The first assertion follows directly from Definition \ref{NF} and Theorem \ref{WNF}.\\
2. According to Theorem \ref{WNF}, the minimal cutset(s) of $\mathbb{G}$ contains $k+1$ followers. If there exist only $k$ (or less) followers adjacent with the leader, when these nodes are removed, the rest followers will not be able to recieve information from the leader, which means the remaining subsystem is not controllable. This contradicts the assumption that the system is $k$-WNF.
\end{proof}

Apparently, to discuss non-fragility of controllability, system (\ref{compact}) is required to be structurally controllable, i.e., the interconnection topology is connected. However, even if the system is strongly structurally controllable (which means the system is controllable for all sets of edge weights \cite{Mayeda79}), the controllability may also be fragile. For example, system (\ref{compact}) with a path topology is strongly structurally controllable if a terminal node is selected as the leader, whereas removing any node between the terminal nodes breaks the connectedness of the graph.

%Generally speaking, the controllable star graph is the ``most non-fragile'' topology, while the line graph is the ``least'' one.

\subsection{Partial controllability}

Many MASs contain important agents, which should be controlled properly. For an uncontrollable MAS, how to ensure the important part be controllable derives the \emph{Partial Controllability Problem}. In this subsection, we provide criteria for partial controllability, and analyse the influence of losing agents on partial controllability via the concept of $\langle s,t\rangle$ vertex cutsets in graph theory. The investigation starts with the definition of controllable node groups.

\begin{definition}\label{node}
A group of agents $i_1,i_2,\cdots,i_r$ in system (\ref{compact}) are said to be partially controllable if for any initial states $x_{i_1}(t_{0}),x_{i_2}(t_{0}),\cdots,x_{i_r}(t_{0})$ and target states $x_{i_1}^*,x_{i_2}^*,\cdots,x_{i_r}^*$, there exists an external input $u(t)$ on the leader and a finite time $t_1>t_0$, such that $x_{i_1}(t_{1})=x_{i_1}^*,~x_{i_2}(t_{1})=x_{i_2}^*,~\cdots,~x_{i_r}(t_{1})=x_{i_r}^*$. Agents $i_1,i_2,\cdots,i_r$ are said to form a controllable node group if they are partially controllable. Moreover, the group is said to be maximal if adding any other agent into this group breaks its partial controllability.
%When the maximal controllable nodes group contains all nodes in (\ref{compact}), the system is said to be controllable.
\end{definition}

Apparently, system (\ref{compact}) is controllable if and only if all the nodes in $\mathbb{V}$ are partially controllable.

\begin{theorem}\label{gram}
In system (\ref{compact}), nodes in the same group $\tilde{V}=\{v_{i_1},v_{i_2},\cdots,v_{i_r}\}$ are partially controllable if and only if the principal minor formulated by the $i_1,i_2,\cdots,i_r$-th rows and columns of the Grammian matrix $W_c(t_0,t_1)$, denoted as $\tilde{W}$, is invertible, where
\begin{equation*}
  W_c(t_0,t_1)=\int_{t_0}^{t_1}e^{-L(t_1-t)}bb^Te^{-L^T(t_1-t)}dt.
\end{equation*}
\end{theorem}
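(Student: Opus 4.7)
The plan is to imitate the classical Grammian argument for controllability, but projected onto the coordinates indexed by $\tilde V$. Let $E=[\mathbf{e}_{i_1},\mathbf{e}_{i_2},\ldots,\mathbf{e}_{i_r}]^T\in\mathbb{R}^{r\times n}$ be the selection matrix whose rows are the standard basis vectors picking out the components $i_1,\ldots,i_r$. Partial controllability of $\tilde V$ is then, by Definition~\ref{node}, equivalent to the statement that for every $x(t_0)\in\mathbb{R}^n$ and every target $(x_{i_1}^*,\ldots,x_{i_r}^*)^T\in\mathbb{R}^r$ there is an input $u(\cdot)$ and a finite $t_1>t_0$ with $Ex(t_1)=(x_{i_1}^*,\ldots,x_{i_r}^*)^T$. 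Observe that $EW_cE^T$ is exactly the $r\times r$ principal submatrix $\tilde W$ in the statement.

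Using the variation-of-constants formula for (\ref{compact}),
\begin{equation*}
x(t_1)=e^{-L(t_1-t_0)}x(t_0)+\int_{t_0}^{t_1}e^{-L(t_1-t)}b\,u(t)\,dt,
\end{equation*}
the requirement reduces to solving $Tu=y^*$ for arbitrary $y^*\in\mathbb{R}^r$, where
\begin{equation*}
Tu:=\int_{t_0}^{t_1}Ee^{-L(t_1-t)}b\,u(t)\,dt,\qquad y^*:=(x_{i_1}^*,\ldots,x_{i_r}^*)^T-Ee^{-L(t_1-t_0)}x(t_0).
\end{equation*}
Since $x(t_0)$ and the target are arbitrary, $y^*$ ranges over all of $\mathbb{R}^r$, so partial controllability is equivalent to surjectivity of the bounded linear operator $T:L^2[t_0,t_1]\to\mathbb{R}^r$. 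Its adjoint is $T^*\eta=b^Te^{-L^T(t_1-t)}E^T\eta$, and a direct computation gives $TT^*=EW_cE^T=\tilde W$.

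For sufficiency, when $\tilde W$ is invertible the explicit input $u(t)=b^Te^{-L^T(t_1-t)}E^T\tilde W^{-1}y^*$ satisfies $Tu=TT^*\tilde W^{-1}y^*=y^*$, so $T$ is surjective and the group is partially controllable. For necessity, if $\tilde W$ is singular, pick $0\ne\eta\in\ker\tilde W$; then
\begin{equation*}
0=\eta^T\tilde W\eta=\int_{t_0}^{t_1}\bigl\|b^Te^{-L^T(t_1-t)}E^T\eta\bigr\|^2\,dt,
\end{equation*}
forcing $\eta^TTu=0$ for every $u\in L^2[t_0,t_1]$. Choosing initial state $x(t_0)=0$ and target coordinates so that $\eta^Ty^*\ne 0$ produces a reachability target that cannot be hit, contradicting partial controllability.

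The main obstacle is simply the bookkeeping that identifies $EW_cE^T$ with the described principal submatrix and that handles the converse cleanly; once the selection-matrix formulation is in place the argument is essentially the textbook Grammian proof. A minor subtlety worth addressing is that the choice of $t_1$ does not matter: invertibility of $\tilde W(t_0,t_1)$ for some $t_1>t_0$ is equivalent to invertibility for every such $t_1$, because $L$ is constant and the integrand is analytic in $t$.
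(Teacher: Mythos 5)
Your proposal is correct and is essentially the paper's own argument in cleaner notation: your explicit input $u(t)=b^Te^{-L^T(t_1-t)}E^T\tilde W^{-1}y^*$ is exactly the paper's $u=-b^Te^{-L^T(t_1-t)}z$ with $z=(\tilde z^T,0,\cdots,0)^T$, $\tilde z=\tilde W^{-1}(\tilde y-\tilde x^*)$, and your necessity step (a kernel vector $\eta$ of $\tilde W$, padded to $\mathbb{R}^n$, annihilating the Grammian quadratic form and hence every reachable increment) is the paper's contradiction verbatim. The only cosmetic differences are your selection-matrix/adjoint packaging ($TT^*=E W_c E^T$) in place of the paper's coordinatewise bookkeeping, and your closing remark on independence of $t_1$, which the paper leaves implicit.
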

\begin{proof}
Without loss of generality, suppose $i_1=1,i_2=2,\cdots,i_r=r$.\\
Sufficiency: Let $\tilde{x}(t_0)=(x_1(t_0),x_2(t_0),\cdots,x_r(t_0))^T\in\mathbb{R}^r$ and $\tilde{x}^*=(x_1^*,x_2^*,\cdots,x_r^*)^T\in\mathbb{R}^r$ be the initial states and target states of agents $v_1,v_2,\cdots,v_r$, respectively. Denote $x^*=(\tilde{x}^{*T}, 0,\cdots,0)^T\in\mathbb{R}^n$, and the initial state of the whole system is $x(t_0)\in\mathbb{R}^n$. Next we prove that there exists a $\tilde{z}\in\mathbb{R}^r$ such that when $u=-b^Te^{-L^T(t_1-t)}z$, the states of agents $v_1,v_2,\cdots,v_r$ at $t_1$ are $x_1^*,x_2^*,\cdots,x_r^*$, respectively, where $z=(\tilde{z}^T,0,\cdots,0)^T\in\mathbb{R}^n$. The trajectory of system (\ref{compact}) is
\begin{equation}\label{states}
\begin{aligned}
x({t_1})&= {e^{ - L{t_1}}}x({t_0}) + \int_{t_0}^{{t_1}} {{e^{ - L({t_1} - t )}}bu(t )dt } \\
        &= {e^{ - L{t_1}}}x({t_0}) + \int_{t_0}^{{t_1}} {{e^{-L(t_1-t) }}b( - {b^T}{e^{{-L^T}(t_1-t) }})zdt } \\
        &= {e^{ - L{t_1}}}x({t_0}) - W_c(t_0,t_1)z.\\
\end{aligned}
\end{equation}
%\end{eqnarray}
Denote $\tilde{y}$ as the first $r$ entries of $e^{-Lt_1}x(t_0)$, let $\tilde{z}=\tilde{W}^{-1}(\tilde{y}-\tilde{x}^*)$, the first $r$ entries of $W_c(t_0,t_1)z$ will be $\tilde{x}^*-\tilde{y}$. This ensures that $x_i(t_1)=x_i^*$, $i=1,2,\cdots,r$, which means agents $v_1,v_2,\cdots,v_r$ are partially controllable.\\
Necessity: If $\tilde{W}$ is not invertible, there exists a $\tilde{w}\neq0\in\mathbb{R}^r$ such that $\tilde{W}\tilde{w}=0$. Let $w=(\tilde{w}^T,0,\cdots,0)^T\in\mathbb{R}^n$, it is not difficult to check that $w^TW_c(t_0,t_1)w=0$. Since agents $v_1,v_2,\cdots,v_r$ are partially controllable, there exist $u(t)$ and $t_1>t_0$ such that
\begin{equation*}
  x({t_1}) = {e^{ - L{t_1}}}x({t_0}) + \int_{t_0}^{{t_1}} {{e^{ - L({t_1} - t )}}bu(t )dt }
\end{equation*}
holds for any $\tilde{x}(t_0)$ and any $\tilde{x}(t_1)$, regardless of the states of the other agents. As $w\neq0$, select $x(t_0)$ and $x(t_1)$ such that $w^T(x(t_1)-{e^{ - L{t_1}}}x({t_0}))\neq0$, and this yields
\begin{equation*}
  w^T\int_0^{{t_1}} {e^{ -L(t_1-t)}}bu(t )dt\neq0.
\end{equation*}
However,
\begin{equation*}
  {w^T}{W_c}w = \int_{{t_0}}^{{t_1}} {||{w^T}{e^{-L(t_1-t) }}b|{|^2}dt }
              = 0
\end{equation*}
yields
\begin{equation*}
  {w^T}{e^{-L(t_1-t) }}b=0, t_0<t\leq t_1.
\end{equation*}
Therefore
\begin{equation*}
  \int_{{t_0}}^{{t_1}} {{w^T}{e^{-L(t_1-t) }}bu(t )dt }  = 0,
\end{equation*}
which makes a contradiction, thus $\tilde{W}$ is invertible.
\end{proof}

\begin{theorem}\label{rank}
In system (\ref{compact}), nodes in the same group $\tilde{V}=\{v_{i_1},v_{i_2},\cdots,v_{i_r}\}$ are partially controllable if and only if the $i_1,i_2,\cdots,i_r$-th rows of the controllability matrix $Q=[b,-Lb,\cdots,(-L)^{n-1}b]$ are linearly independent.
\end{theorem}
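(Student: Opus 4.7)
The plan is to bridge Theorem \ref{gram} (Gramian criterion) with the row-rank criterion through the standard Cayley--Hamilton argument, adapted to the partial controllability setting. Without loss of generality assume $i_1=1,\ldots,i_r=r$, so $\tilde W$ is the leading $r\times r$ principal submatrix of $W_c(t_0,t_1)$.

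First I would reformulate the Gramian condition as a quadratic-form statement. For any $\tilde w\in\mathbb{R}^r$, let $w=(\tilde w^T,0,\ldots,0)^T\in\mathbb{R}^n$; then $\tilde w^T\tilde W\tilde w=w^T W_c(t_0,t_1)w=\int_{t_0}^{t_1}\|w^T e^{-L(t_1-t)}b\|^2\,dt$. Hence $\tilde W$ is singular if and only if there exists a nonzero $\tilde w$ (equivalently a nonzero $w$ supported on the first $r$ coordinates) with $w^T e^{-L(t_1-t)}b=0$ for all $t\in[t_0,t_1]$.

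Next I would translate this continuous vanishing into an algebraic condition on $Q$. Using the power-series expansion $e^{-L\tau}=\sum_{k\ge 0}\frac{(-L)^k\tau^k}{k!}$, the identity $w^T e^{-L\tau}b\equiv 0$ on an interval forces $w^T(-L)^k b=0$ for every $k\ge 0$. By the Cayley--Hamilton theorem, $(-L)^k$ for $k\ge n$ is a linear combination of $I,-L,\ldots,(-L)^{n-1}$, so the infinite family of scalar equations collapses to the finite system $w^T Q=0$. Conversely, if $w^T Q=0$ then $w^T(-L)^k b=0$ for all $k\ge 0$, hence $w^T e^{-L\tau}b\equiv 0$.

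Combining these equivalences: $\tilde W$ is invertible if and only if no nonzero vector supported on coordinates $\{i_1,\ldots,i_r\}$ lies in the left null space of $Q$, which is exactly the linear independence of the $i_1,\ldots,i_r$-th rows of $Q$. Applying Theorem \ref{gram} then finishes both directions. The main delicate point I expect is cleanly justifying the passage from $w^T e^{-L\tau}b\equiv 0$ on $[t_0,t_1]$ to $w^T(-L)^k b=0$ for all $k$; this follows because the map $\tau\mapsto w^T e^{-L\tau}b$ is real-analytic, so vanishing on an interval of positive length forces all Taylor coefficients at $\tau=t_1-t_0$ (equivalently all derivatives) to vanish, yielding the desired equations for every $k\ge 0$.
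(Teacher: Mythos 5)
Your proposal is correct and follows essentially the same route as the paper: both bridge to Theorem \ref{gram} by identifying $\tilde w^T\tilde W\tilde w$ with $\int_{t_0}^{t_1}\|w^Te^{-L(t_1-t)}b\|^2\,dt$, then pass between the vanishing of $w^Te^{-L\tau}b$ and $w^TQ=0$ via analyticity in one direction and Cayley--Hamilton in the other. The only difference is organizational --- you phrase it as a single chain of equivalences, while the paper runs the two directions as separate contradiction arguments --- which does not change the mathematical content.
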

\begin{proof}
Without loss of generality, suppose $i_1= 1,i_2= 2,\cdots ,i_r= r$.\\
Sufficiency: The first $r$ rows of $Q$ are linearly independent. If agents $v_1,v_2,\cdots,v_r$ are not partially controllable, according to Theorem \ref{gram}, $\tilde{W}$ is not invertible for all $t_1>t_0$, which means there exists a $\tilde{w}\neq0\in\mathbb{R}^r$ such that $\tilde{w}^T\tilde{W}\tilde{w}=0$. Let $w=(\tilde{w}^T,0,\cdots,0)^T$, we get
\begin{equation*}
  0=w^TW_c(t_0,t_1)w=\int_{t_0}^{t_1} (w^Te^{-L(t_1-t)}b)(w^Te^{-L(t_1-t)}b)^Tdt.
\end{equation*}
Therefore $w^Te^{-L(t_1-t)}b=0$  holds for all $t_0<t\leq t_1$, thus $w^TL^kb=0$, $k=0,1,2,\cdots,n-1$ (actually it holds for all $k\geq0$). Take the first $r$ rows of $Q$, denoted as $\tilde{Q}$, since $\tilde{w}\neq0$ and $\tilde{w}^T\tilde{Q}=0$, these rows must be linearly dependent, which makes a contradiction.\\
Necessity: If $rank(\tilde{Q})<r$, there exists a $\tilde{w}\neq0\in\mathbb{R}^r$ such that $\tilde{w}^T\tilde{Q}=0$. Let $w=(\tilde{w}^T,0,\cdots,0)^T\in\mathbb{R}^n$, according to Hamilton-Cayley Theorem, $w^TL^kb=0$ holds for all $k=0,1,2,\cdots$, thus $w^Te^{-L(t_1-t)}b=0,t_0<t\leq t_1$. Since $w^TW_c(t_0,t_1)w=0=\tilde{w}^T\tilde{W}\tilde{w}$, we declare that $\tilde{W}$ is not invertible, i.e., agents $v_1,v_2,\cdots,v_r$ are not partially controllable. This is a contradiction.
\end{proof}

\begin{corollary}\label{uncontrollable}
The following assertions hold:\\
1. If the rank of the controllability matrix is $r$, then there are at most $r$ agents in system (\ref{compact}) can be partially controllable.\\
2. If some rows of the controllability matrix are linearly dependent, then, the agents corresponding to these rows are not partially controllable.\\
3. Each set of maximal linearly independent rows of the controllability matrix corresponds to a maximal controllable node group.
\end{corollary}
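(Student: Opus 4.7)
The plan is to derive all three assertions directly from Theorem \ref{rank}, which established the exact correspondence between partial controllability of a node group $\{v_{i_1},\ldots,v_{i_r}\}$ and linear independence of rows $i_1,\ldots,i_r$ of the controllability matrix $Q=[b,-Lb,\ldots,(-L)^{n-1}b]$. Thus no new analytic machinery is needed; the entire corollary is a translation between the linear-algebraic side of Theorem \ref{rank} and the partial-controllability side.

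For assertion 1, I would argue by contradiction. Suppose some controllable node group contains at least $r+1$ agents, say $v_{i_1},\ldots,v_{i_{r+1}}$. By the necessity direction of Theorem \ref{rank}, rows $i_1,\ldots,i_{r+1}$ of $Q$ must be linearly independent, contradicting $\mathrm{rank}(Q)=r$. Hence every controllable node group has at most $r$ members.

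For assertion 2, the statement is exactly the contrapositive of the sufficiency direction of Theorem \ref{rank}: if rows indexed by $i_1,\ldots,i_k$ are linearly dependent, then $v_{i_1},\ldots,v_{i_k}$ cannot be partially controllable. Nothing further is required.

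For assertion 3, let $\{\text{row }i_1,\ldots,\text{row }i_r\}$ be a maximal linearly independent subset of the rows of $Q$. By Theorem \ref{rank}, the set $\tilde V=\{v_{i_1},\ldots,v_{i_r}\}$ is a controllable node group. To check that it is \emph{maximal} in the sense of Definition \ref{node}, pick any $v_j\notin \tilde V$; by maximality of the chosen row set, row $j$ of $Q$ is a linear combination of rows $i_1,\ldots,i_r$, so those $r+1$ rows are linearly dependent. Assertion 2 then shows that $\tilde V\cup\{v_j\}$ fails to be partially controllable, establishing maximality. The only real point of care, and the mildest obstacle, is in this last step: to translate \emph{maximal independence of rows} into \emph{maximality of the node group}, one must explicitly invoke assertion 2 rather than trying to prove it from scratch, which is why I would present the three parts in the order 1, 2, 3 even though they could otherwise be handled independently.
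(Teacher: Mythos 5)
Your proposal is correct and takes essentially the same route as the paper: the paper's entire proof of Corollary \ref{uncontrollable} is the single remark that the assertions follow directly from Theorem \ref{rank}, and your argument simply makes those deductions explicit (contradiction with the rank bound for assertion 1, one direction of the equivalence for assertion 2, and assertion 2 plus maximal independence for assertion 3). One terminological nit: assertion 2 is the contrapositive of the \emph{necessity} direction of Theorem \ref{rank} (partially controllable $\Rightarrow$ rows linearly independent), not of the sufficiency direction as you wrote, though since the theorem is an equivalence this slip has no logical consequence.
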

\begin{proof}
These assertions can be obtained directly from Theorem \ref{rank}.
\end{proof}

Based on the concept of partial controllability, we should consider whether we can make the nodes in a controllable node group be still partially controllable after removing some other followers, which is called \emph{Partial Controllability Preserving Problem}. Similar to the effect of cutsets on non-fragility of controllability, the concept of $\langle s,t\rangle$ vertex cutsets plays a critical roll in preserving partial controllability.

\begin{lemma}\label{qCS}
For any weights of the edges in $\mathbb{G}$, follower $q$ is partially controllable after removing a node set $\mathbb{V}'$ if and only if $\mathbb{V}'$ is not a $\langle v_1,q\rangle$ vertex cutset.
\end{lemma}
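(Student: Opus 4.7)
My plan is to reduce the statement to a question about a single row of the controllability matrix and then translate that into a reachability question on the graph, using the fact that the Laplacian's combinatorial structure prevents accidental cancellations at minimal length.

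First I would invoke Theorem \ref{rank} in the degenerate case $r=1$: a single follower $v_q$ is partially controllable for the reduced system $\dot x = -L' x + b' u$ (where $L'$ is the Laplacian of $\mathbb{G}-\mathbb{V}'$ and $b'$ is the indicator of the leader $v_1$) if and only if the $q$-th row of the reduced controllability matrix $Q' = [b',\,-L'b',\,\ldots,\,(-L')^{n-|\mathbb{V}'|-1}b']$ is nonzero. Since $b' = e_1$ in the reduced coordinates, this row is nonzero if and only if $(L'^{k})_{q,1}\ne 0$ for some $k\ge 0$. Thus the lemma reduces to the purely graph-theoretic claim: $(L'^{k})_{q,1}\ne 0$ for some $k$ if and only if $v_1$ and $v_q$ lie in the same connected component of $\mathbb{G}-\mathbb{V}'$.

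For the forward direction (contrapositive), suppose $\mathbb{V}'$ is a $\langle v_1,q\rangle$ vertex cutset. Then in $\mathbb{G}-\mathbb{V}'$ the vertices $v_1$ and $v_q$ sit in distinct connected components, so after permuting indices $L'$ is block diagonal with $v_1$ and $v_q$ in different blocks. Every power $L'^k$ inherits this block structure, hence $(L'^k)_{q,1}=0$ for all $k$, the $q$-th row of $Q'$ vanishes, and $v_q$ is not partially controllable.

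For the converse, assume $v_1$ and $v_q$ are connected in $\mathbb{G}-\mathbb{V}'$, and let $d$ be the graph distance between them there. I would expand $(L'^{d})_{q,1}$ as a sum over walks of length $d$ from $v_q$ to $v_1$, with each step contributing $L'_{i,j}$. Any walk that uses a diagonal entry $L'_{ii}$ (a ``stay'' step) projects to a strictly shorter walk from $v_q$ to $v_1$ via off-diagonal edges, which would contradict the minimality of $d$. Hence only genuine shortest paths survive, and each contributes $(-1)^d \prod_{\text{edges}} a_{ij}$ with the same sign. Since all edge weights are strictly positive, no cancellation is possible, so $(L'^{d})_{q,1}\ne 0$ and $v_q$ is partially controllable.

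The main obstacle is precisely this sign-cancellation issue: a priori, walks of various lengths involving diagonal (``self-loop'') entries $L'_{ii} = d_i > 0$ and off-diagonal entries $L'_{ij} = -a_{ij} < 0$ could cancel out for every $k$. The key observation that resolves it is that choosing $k = d$, the exact distance, forces the sum to consist only of shortest paths all of which have identical sign, so positivity of the weights guarantees a strictly nonzero entry. Once this step is in place, the rest is bookkeeping through Theorem \ref{rank}.
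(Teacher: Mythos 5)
Your proposal is correct and follows essentially the same route as the paper: both reduce the claim via Theorem \ref{rank} (in the $r=1$ case) to whether the $q$-th row of the reduced controllability matrix is nonzero, and both handle the cutset direction by observing that $v_1$ and $q$ end up in different connected components (your block-diagonal formulation is the rigorous version of the paper's ``cannot receive information'' remark). The one place you go beyond the paper is the sufficiency step: the paper simply asserts that a path from $v_1$ to $q$ in $\mathbb{G}-\mathbb{V}'$ makes that row nonzero, whereas you actually prove it by expanding $(L'^{d})_{q,1}$ over length-$d$ walks and noting that at the exact graph distance $d$ only pure shortest edge-walks survive, all with sign $(-1)^d$, so positivity of the weights rules out cancellation --- a genuine justification of the ``for any weights'' claim that the paper's one-line argument leaves implicit.
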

\begin{proof}
Necessity: If $\mathbb{V}'$ is a $\langle v_1,q\rangle$ vertex cutset, according to Definition \ref{stCS}, removing $\mathbb{V}'$ makes $q$ not able to receive information from $v_1$, therefore $q$ is not partially controllable, which contradicts the assumption.\\
Sufficiency: Consider the subsystem generated by removing $\mathbb{V}'$, since $\mathbb{V}'$ is not a $\langle v_1,q \rangle$ vertex cutset, there exists a path from the leader to $q$ in $\mathbb{G}-\mathbb{V}'$. By Theorem \ref{rank}, since the row of the controllability matrix corresponding to $q$ remains not all-zero, $q$ is controllable.
\end{proof}

This lemma shows that a specific follower is always partially controllable for any weight assignments, unless the removed followers form a $\langle v_1,q\rangle$ vertex cutset. Usually, the important agent group of an MAS consists of not only a single agent. To ensure all the agents in the group be partially controllable, the concept of ``compressed graph'' is needed. Refer to Definition \ref{compress}, suppose the induced subgraph by $\mathbb{V}_q$ is connected, and $\mathbb{G}_{\mathbb{V}_q}$ is the compressed graph of $\mathbb{G}$ by $\mathbb{V}_q$.

\begin{theorem}\label{spc}
For any weights of the edges in $\mathbb{G}-{\mathbb{V}_q}$, there exist a set of weights for the other edges in $\mathbb{G}$, such that nodes in $\mathbb{V}_q$ are partially controllable after removing a node set $\mathbb{V}'$ ($\mathbb{V}_q\bigcap\mathbb{V}'=\emptyset$) if and only if $\mathbb{V}'$ is not a $\langle v_1,q\rangle$ vertex cutset of $\mathbb{G}_{\mathbb{V}_q}$, where $q$ is the node that $\mathbb{V}_q$ compresses to.
\end{theorem}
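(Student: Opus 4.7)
The plan is to handle the two directions separately: necessity via a path-lifting reduction to Lemma \ref{qCS}, and sufficiency via a genericity argument supplemented by an explicit weight construction.

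For necessity I would argue by contrapositive. Assuming $\mathbb{V}'$ is a $\langle v_1,q\rangle$ vertex cutset of $\mathbb{G}_{\mathbb{V}_q}$, pick any $v\in\mathbb{V}_q$ and take an arbitrary path in $\mathbb{G}$ from $v_1$ to $v$. Truncating it at its first encounter with $\mathbb{V}_q$ and using Definition \ref{compress} to identify that vertex with $q$ yields a path from $v_1$ to $q$ in $\mathbb{G}_{\mathbb{V}_q}$, which by hypothesis must pass through $\mathbb{V}'$; combined with $v\notin\mathbb{V}'$ (from $\mathbb{V}_q\cap\mathbb{V}'=\emptyset$), this proves that $\mathbb{V}'$ is a $\langle v_1,v\rangle$ vertex cutset of $\mathbb{G}$. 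Lemma \ref{qCS} then rules out partial controllability of $v$ under any weight assignment, and a fortiori of the whole group $\mathbb{V}_q$.

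For sufficiency, assume $\mathbb{V}'$ is not a $\langle v_1,q\rangle$ vertex cutset of $\mathbb{G}_{\mathbb{V}_q}$. Reversing the path-lifting produces a path in $\mathbb{G}-\mathbb{V}'$ from $v_1$ to some vertex of $\mathbb{V}_q$, and since the induced subgraph on $\mathbb{V}_q$ is connected and disjoint from $\mathbb{V}'$, the whole of $\mathbb{V}_q$ sits inside the connected component $\mathbb{H}$ of $v_1$ in $\mathbb{G}-\mathbb{V}'$. Because nodes outside $\mathbb{H}$ are dynamically decoupled from those inside, it suffices to secure partial controllability of $\mathbb{V}_q$ within the $\mathbb{H}$-subsystem. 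The graph $\mathbb{H}$ is connected with leader $v_1$, so Theorem~3 of \cite{Goldin13} ensures that the determinant of its controllability matrix is a nonzero polynomial $P(\vec{w}_f,\vec{w}_e)$ in the edge weights of $\mathbb{H}$, where $\vec{w}_f$ collects the prescribed weights on edges lying entirely inside $\mathbb{G}-\mathbb{V}_q$ and $\vec{w}_e$ the free weights on edges incident to $\mathbb{V}_q$. The core of the argument is to show that for every admissible $\vec{w}_f$ the specialization $P(\vec{w}_f,\cdot)$ is still not identically zero as a polynomial in $\vec{w}_e$; any $\vec{w}_e$ outside its zero set will then render the $\mathbb{H}$-subsystem controllable, and Theorem \ref{rank} will upgrade this to partial controllability of $\mathbb{V}_q$.

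The main obstacle is precisely this non-vanishing step, because in principle an adversarial $\vec{w}_f$ could collapse $P(\vec{w}_f,\cdot)$ to zero. To preclude it I plan to exhibit an explicit $\vec{w}_e$ built from a spanning tree of the induced subgraph on $\mathbb{V}_q$ with pairwise distinct small weights, together with pairwise distinct small weights on the edges joining $\mathbb{V}_q$ to $\mathbb{H}\setminus\mathbb{V}_q$. A Lemma \ref{star}-style small-weight expansion should then isolate a dominant monomial in $\vec{w}_e$ whose coefficient in $P$ is dictated only by the connectivity of the induced subgraph on $(\mathbb{V}(\mathbb{H})\setminus\mathbb{V}_q)\cup\{v_1\}$ and by the guaranteed $v_1$-to-$\mathbb{V}_q$ path, hence remains nonzero for every positive $\vec{w}_f$, closing the sufficiency direction.
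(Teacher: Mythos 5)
Your necessity argument is sound and is essentially the paper's, made slightly more precise by the path-lifting through $\mathbb{G}_{\mathbb{V}_q}$. The sufficiency direction, however, rests on a reduction that is not merely hard to complete but false. You reduce the problem to making the \emph{entire} connected component $\mathbb{H}$ of $v_1$ in $\mathbb{G}-\mathbb{V}'$ controllable, i.e., to showing that for every admissible $\vec{w}_f$ the specialization $P(\vec{w}_f,\cdot)$ is a nonzero polynomial in $\vec{w}_e$. That claim fails: the adversarially fixed weights inside $\mathbb{G}-\mathbb{V}_q$ can create a symmetry in $\mathbb{H}$ that no choice of the free weights can break, because the free weights only touch edges incident to $\mathbb{V}_q$. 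Concretely, take $\mathbb{V}_q=\{v_4\}$, followers $v_2,v_3$ adjacent only to the leader $v_1$ with equal fixed weights $a_{12}=a_{13}=1$ (these edges lie in $\mathbb{G}-\mathbb{V}_q$, so their weights are prescribed), $v_4$ adjacent only to $v_1$ with free weight $w$, and $\mathbb{V}'=\emptyset$, which is certainly not a $\langle v_1,q\rangle$ cutset. Then $z=(0,1,-1,0)^T$ satisfies $Lz=z$ and $z^Tb=0$ for every $w>0$, hence $z^TQ=0$ and $\det Q\equiv 0$ as a polynomial in $w$. So your ``core non-vanishing step'' cannot be rescued by any dominant-monomial expansion; every coefficient of $P(\vec{w}_f,\cdot)$ vanishes for this $\vec{w}_f$. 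Note that the theorem itself is untouched by this example: the row of $Q$ indexed by $v_4$ has second entry $w\neq 0$, so $v_4$ \emph{is} partially controllable. This shows exactly that full controllability of $\mathbb{H}$ is strictly stronger than the statement to be proved, and is the wrong target.

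The repair is to aim, as the paper does, only at the linear independence of the rows of the controllability matrix indexed by $\mathbb{V}_q$, a quantity insensitive to symmetries of $\mathbb{H}$ away from $\mathbb{V}_q$. The paper's route: by Lemma \ref{qCS}, in the compressed graph the node $q$ remains partially controllable after removing $\mathbb{V}'$ for \emph{any} weights on $\mathbb{G}-\mathbb{V}_q$; choose weights on the induced subgraph of $\mathbb{V}_q$ so that the group is internally partially controllable; then, in the spirit of Lemma \ref{star} and Theorem \ref{SNF}, assign sufficiently small weights to the edges joining $\mathbb{V}_q$ to the rest of the graph, so that the $|\mathbb{V}_q|\times|\mathbb{V}_q|$ minors of the $\mathbb{V}_q$-rows of $Q$ stay nonzero. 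Your small-weight expansion machinery is perfectly usable here, but it must be applied to those row minors (equivalently, to the rank condition of Theorem \ref{rank} restricted to $\mathbb{V}_q$), not to $\det Q_{\mathbb{H}}$.
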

\begin{proof}
Necessity: If $\mathbb{V}'$ is a $\langle v_1,q\rangle$ vertex cutset of $\mathbb{G}_{\mathbb{V}_q}$, by Definition \ref{stCS}, none of the nodes in $\mathbb{V}_q$ receives information from the leader $v_1$, which makes the nodes in $\mathbb{V}_q$ not partially controllable.\\
Sufficiency: Consider the subgraph $\mathbb{G}_{\mathbb{V}_q}$, by Lemma \ref{qCS}, $q$ is controllable for any set of edge weights in $\mathbb{G}-\mathbb{V}_q$ (which is also $\mathbb{G}_{\mathbb{V}_q}-q$). Obviously, there exist a set of edge weights for the induced graph of $\mathbb{V}_q$ such that nodes in $\mathbb{V}_q$ are partially controllable. Similar to the sufficiency proof of Theorem \ref{SNF}, assign weights small enough to the edges between $\mathbb{G}-\mathbb{V}_q$ and $\mathbb{V}_q$, the rows of the controllability matrix corresponding to $\mathbb{V}_q$ remains linearly independent if $\mathbb{V}_q\bigcap\mathbb{V}'=\emptyset$, i.e., the partial controllability of $\mathbb{V}_q$ is preserved.
\end{proof}

%Theorem \ref{spc} indicates that there exist a set of weights for the edges between $\mathbb{V}_q$ and the other agents, to preserve the partial controllability after removing any node set, unless the removed nodes form a $\langle v_1,q\rangle$ vertex cutset of $\mathbb{G}_{\mathbb{V}_q}$.
For Theorem \ref{spc}, not all choices of edge weights could guarantee that the partial controllability be preserved (see Example \ref{fanli1}). However, it can be proved, the set of weights has Lebesgue measure zero in $\mathbb{R}^{|\mathbb{E}|}$, for the edges removing whom will make $\mathbb{V}_q$ not partially controllable. Meanwhile, in Definition \ref{compress}, the induced subgraph by $\mathbb{V}_q$ being connected should not be ignored. If it is not connected, $\mathbb{V}_q$ can not compress to a single node, otherwise, the sufficiency of Theorem \ref{spc} no longer holds, see Example \ref{fanli2}.

\begin{example}\label{fanli1}
%\begin{figure}
%  \centering
%  % Requires \usepackage{graphicx}
%  \includegraphics[width=0.25\textwidth]{F4.jpg}\\
%  \caption{An illustrative example for Theorem \ref{spc}.}\label{F4}
%\end{figure}

The interconnection topology of system (\ref{compact}) is depicted as Fig. \ref{F4}, and the system is controllable. Compressing $\{v_4,v_5\}$ yields Fig. \ref{F10} (whose edge weights are not of importance and are omitted). Obviously, $v_3$ does not form a $\langle v_1,v_4\rangle$ vertex cutset of Fig. \ref{F10}. However, if $v_3$ is removed, the remaining subsystem is not controllable.
\end{example}

\begin{example}\label{fanli2}
\begin{figure}
  \centering
  % Requires \usepackage{graphicx}
  \includegraphics[width=2.2in]{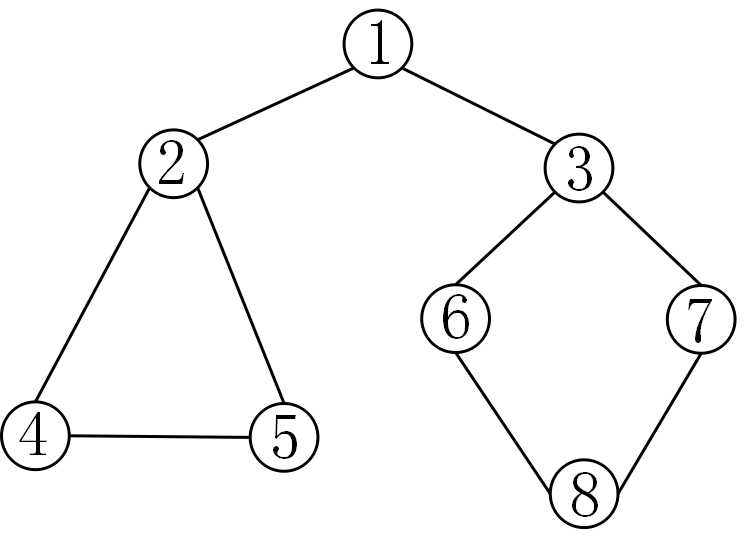}\\
  \caption{A graph consists of 8 nodes.}\label{F3}
\end{figure}

The interconnection topology of system (\ref{compact}) is depicted as Fig. \ref{F3}, and the system is structurally controllable. The induced subgraph by $\{v_4,v_5,v_6,v_7,v_8\}$ is not connected, and the compressed graph is not Fig. \ref{F10}. Otherwise, when $v_3$ is removed, the remaining subsystem should be also structurally controllable by Theorem \ref{spc}. However, removing $v_3$ makes $\{v_6,v_7,v_8\}$ separated from the leader, i.e., they are not partially controllable for any weight assignments.
\end{example}

When the induced subgraph by $\mathbb{V}_q$ is not connected, the result is as follows. Suppose that the connected components $\mathbb{V}_{q_1},\cdots,\mathbb{V}_{q_s}$, compress to $q_1,\cdots,q_s$, respectively.

\begin{theorem}\label{stCSs}
For any weights of the edges in $\mathbb{G}-{\mathbb{V}_q}$, there exist a set of weights for the other edges in $\mathbb{G}$, such that nodes in $\mathbb{V}_q$ are partially controllable after removing a node set $\mathbb{V}'$ ($\mathbb{V}_q\bigcap\mathbb{V}'=\emptyset$) if and only if $\mathbb{V}'$ is not a $\langle v_1,q_i\rangle$ vertex cutset of $\mathbb{G}_{\mathbb{V}_q}$ for any $i=1,\cdots,s$, where $q_i$ is the node that $V_{q_i}$ compresses to.
\end{theorem}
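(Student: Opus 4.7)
The plan is to mirror the two-way proof of Theorem \ref{spc}, treating each connected component $\mathbb{V}_{q_i}$ in parallel and then combining the per-component weight choices into a single assignment for all of $\mathbb{V}_q$.

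For necessity, suppose $\mathbb{V}'$ is a $\langle v_1,q_i\rangle$ vertex cutset of $\mathbb{G}_{\mathbb{V}_q}$ for some $i$. Any walk in $\mathbb{G}-\mathbb{V}'$ from $v_1$ to a node $w\in\mathbb{V}_{q_i}$ projects, by replacing each maximal segment lying inside $\mathbb{V}_{q_j}$ with the compressed vertex $q_j$, to a walk in $\mathbb{G}_{\mathbb{V}_q}-\mathbb{V}'$ from $v_1$ to $q_i$; since no such walk exists, $\mathbb{V}_{q_i}$ is separated from $v_1$ in $\mathbb{G}-\mathbb{V}'$. Hence the rows of $(-L)^k b$ indexed by $\mathbb{V}_{q_i}$ vanish for every $k\geq 0$, and Theorem \ref{rank} forces $\mathbb{V}_{q_i}\subseteq\mathbb{V}_q$ to be not partially controllable.

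For sufficiency, fix the given weights on $\mathbb{G}-\mathbb{V}_q$ and, for each $i$, apply Theorem \ref{spc} to the single connected set $\mathbb{V}_{q_i}$ to obtain weights on the edges incident to $\mathbb{V}_{q_i}$ that make $\mathbb{V}_{q_i}$ partially controllable after removing $\mathbb{V}'$. To merge these $s$ assignments I would rescale the edges joining each $\mathbb{V}_{q_i}$ to $\mathbb{G}-\mathbb{V}_q$ by a small parameter $\epsilon_i>0$ and view the determinant of a suitable $|\mathbb{V}_q|\times|\mathbb{V}_q|$ submatrix of the controllability matrix as a polynomial in $(\epsilon_1,\ldots,\epsilon_s)$; following Lemma \ref{star} and the sufficiency half of Theorem \ref{spc}, the dominant term should factor as a product over $i=1,\ldots,s$ in which the $i$-th factor is the single-component determinant whose nonvanishing is delivered by Theorem \ref{spc}. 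Taking the $\epsilon_i$ small enough then yields a nonzero global determinant, hence the desired partial controllability of $\mathbb{V}_q$.

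The hard part is the rigorous leading-order factorization: one must verify that as the $\epsilon_i$ tend to zero the relevant minor of the controllability matrix asymptotically block-diagonalizes over the components $\mathbb{V}_{q_i}$, i.e., that cross-component coupling contributes only to strictly higher-order terms, since the only routes between $\mathbb{V}_{q_i}$ and $\mathbb{V}_{q_j}$ for $j\neq i$ pass through $\mathbb{G}-\mathbb{V}_q$ and therefore pick up a factor of $\epsilon_i\epsilon_j$. A cleaner finish, should this book-keeping become unwieldy, is a genericity argument in the spirit of Lemma \ref{kWNF}: partial controllability of $\mathbb{V}_q$ fails precisely on the zero set of a single polynomial in the free edge weights, and the small-$\epsilon$ construction exhibits at least one point outside this set, so the failure set has Lebesgue measure zero in the remaining weight space and the required weights exist in abundance.
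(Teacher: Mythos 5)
Your necessity argument is sound and is essentially the paper's (the paper dismisses it as ``obvious''; your walk-projection onto the compressed graph is a correct way to make it precise). The genuine gap is in the sufficiency merging step, and it is not mere unfinished bookkeeping: the claimed leading-order factorization is false. Take $\mathbb{G}-\mathbb{V}_q$ to be the single edge $v_1u$, and let $\mathbb{V}_q$ consist of two singleton components $v$ and $w$, both attached only to $u$, with connecting weights $\alpha=\epsilon_1 a$ and $\beta=\epsilon_2 b$. Each single-component factor supplied by Theorem \ref{spc} is the $1\times1$ minor $\alpha$ (resp.\ $\beta$), so your product prediction for the dominant term of the relevant $2\times2$ minor is $\alpha\beta$. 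But computing the rows of $v$ and $w$ in $Q=[b,-Lb,(-L)^2b,(-L)^3b]$ gives $Q_v=(0,0,\alpha,-(2\alpha+2\alpha^2+\alpha\beta))$ and $Q_w=(0,0,\beta,-(2\beta+\alpha\beta+2\beta^2))$, whose only nonzero $2\times2$ minor equals $\alpha\beta(\alpha-\beta)$: the bilinear term $\alpha\beta$ is absent, and partial controllability of $\{v,w\}$ comes entirely from the cubic cross term. The structural reason is that in your simultaneous scaling, the leading-order response of \emph{every} component is computed in the fully decoupled system, so here both leading rows are scalar multiples of one and the same vector (the shifted $u$-row of the two-node system $v_1u$). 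Asymptotic block-diagonalization is true but does not deliver the conclusion: full row rank of each leading block $R_i$ does not imply full row rank of the stacked matrix $[R_1;\dots;R_s]$, and whenever two components hang off the same part of $\mathbb{G}-\mathbb{V}_q$ the stacked leading term can vanish identically, no matter how the free weights are chosen.

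The genericity fallback does not repair this, for the reason you yourself half-acknowledge: to conclude that the failure set has Lebesgue measure zero in the free-weight space one must exhibit a single point where some $|\mathbb{V}_q|\times|\mathbb{V}_q|$ minor is nonzero, and the only construction on offer is the one that just failed; ``zero set of a single polynomial'' is useful only after that polynomial is known not to vanish identically. This is also where your route diverges from the paper's: the paper combines components \emph{by induction}, attaching one component at a time, so that the leading-order rows of the newly attached component are computed in the system in which all previously attached components are already present with fixed weights. In the example above this is exactly what saves the argument: attach $v$ first with weight $\alpha\neq0$, then $w$ with weight $\delta$; as $\delta\to0$ the leading row of $w$ is a multiple of the shifted $u$-row of the three-node system containing $v$, which one checks is \emph{not} in the span of $Q_v$ (parallelism would force $\alpha=0$), so joint independence survives for small $\delta$. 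A correct write-up should therefore follow the sequential scheme -- and must still prove, which the paper also omits, that each newly attached component's leading rows avoid the span of the previously secured rows -- rather than the simultaneous multi-parameter scaling you propose.
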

\begin{proof}
Necessity is obvious. Sufficiency proof can be derived from the proof of Theorem \ref{spc} by mathematical induction, and is omitted here.
\end{proof}

\begin{corollary}
There exist a set of edge weights for system (\ref{compact}), such that nodes in $\mathbb{V}_q$ are partially controllable after removing a node set $\mathbb{V}'$ ($\mathbb{V}_q\bigcap\mathbb{V}'=\emptyset$) if and only if $\mathbb{V}'$ is not any $\langle v_1,\tilde{v}\rangle$ vertex cutset of $\mathbb{G}$ for all $\tilde{v}\in \mathbb{V}_q$.
\end{corollary}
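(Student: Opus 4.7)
The plan is to deduce this corollary directly from Theorem \ref{stCSs}, the reduction being a purely graph-theoretic translation between $\langle v_1,\tilde v\rangle$ vertex cutsets of $\mathbb{G}$ and $\langle v_1,q_i\rangle$ vertex cutsets of the compressed graph $\mathbb{G}_{\mathbb{V}_q}$. Let $\mathbb{V}_{q_1},\ldots,\mathbb{V}_{q_s}$ be the connected components of the subgraph induced by $\mathbb{V}_q$, with $\mathbb{V}_{q_i}$ compressing to $q_i$. I aim to prove that
\begin{equation*}
\bigl(\forall i\le s:\ \mathbb{V}'\text{ is not a }\langle v_1,q_i\rangle\text{ vertex cutset of }\mathbb{G}_{\mathbb{V}_q}\bigr)\ \Longleftrightarrow\ \bigl(\forall \tilde v\in\mathbb{V}_q:\ \mathbb{V}'\text{ is not a }\langle v_1,\tilde v\rangle\text{ vertex cutset of }\mathbb{G}\bigr),
\end{equation*}
after which Theorem \ref{stCSs} immediately yields the corollary.

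The central observation is that, since $\mathbb{V}'\cap\mathbb{V}_q=\emptyset$, removing $\mathbb{V}'$ from $\mathbb{G}$ keeps each $\mathbb{V}_{q_i}$ internally connected; hence in $\mathbb{G}-\mathbb{V}'$ the leader $v_1$ can reach some node of $\mathbb{V}_{q_i}$ iff it can reach every node of $\mathbb{V}_{q_i}$. First I would verify the natural path correspondence under compression: a path from $v_1$ to $q_i$ in $\mathbb{G}_{\mathbb{V}_q}-\mathbb{V}'$ lifts, by Definition \ref{compress}, to a walk from $v_1$ to some vertex of $\mathbb{V}_{q_i}$ in $\mathbb{G}-\mathbb{V}'$ (by replacing the final edge incident to $q_i$ with an edge to a witness neighbor inside $\mathbb{V}_{q_i}$), and conversely any path from $v_1$ to a vertex of $\mathbb{V}_{q_i}$ in $\mathbb{G}-\mathbb{V}'$ projects to a walk in $\mathbb{G}_{\mathbb{V}_q}-\mathbb{V}'$ ending at $q_i$, from which a simple path can be extracted. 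This shows $\mathbb{V}'$ separates $v_1$ from $q_i$ in $\mathbb{G}_{\mathbb{V}_q}$ iff it separates $v_1$ from every $\tilde v\in\mathbb{V}_{q_i}$ in $\mathbb{G}$.

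Combining this equivalence with the internal connectivity of each $\mathbb{V}_{q_i}$, the condition on the right side of the displayed equivalence, restricted to $\tilde v\in\mathbb{V}_{q_i}$, is either vacuously true for all such $\tilde v$ or fails simultaneously for all of them, matching the single condition for $q_i$. Taking the conjunction over $i=1,\ldots,s$ gives exactly the corollary's hypothesis; applying Theorem \ref{stCSs} in both directions then closes the argument.

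I expect the only delicate point to be the path-correspondence step, specifically the fact that compression neither creates spurious connections bypassing $\mathbb{V}'$ nor erases existing ones: one must carefully use $\mathbb{V}'\cap\mathbb{V}_q=\emptyset$ and the construction in Definition \ref{compress} to argue that edges incident to $q_i$ in $\mathbb{G}_{\mathbb{V}_q}$ are in bijection with the $\mathbb{G}$-edges from $\mathbb{V}\setminus\mathbb{V}_q$ into $\mathbb{V}_{q_i}$. Once this bookkeeping is carried out, the rest of the proof is a routine appeal to Theorem \ref{stCSs}.
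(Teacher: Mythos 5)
Your proof is correct, but it takes a genuinely different route from the paper's. The paper disposes of this corollary in one line: it re-applies Theorem \ref{stCSs} with the \emph{trivial} decomposition in which every single node of $\mathbb{V}_q$ is regarded as compressing to itself, so the ``compressed graph'' is just $\mathbb{G}$ and the cutset condition of the theorem literally becomes the condition in the corollary --- no path bookkeeping at all. You instead apply Theorem \ref{stCSs} exactly as stated (with the maximal connected components $\mathbb{V}_{q_1},\ldots,\mathbb{V}_{q_s}$ compressing to $q_1,\ldots,q_s$) and then prove the purely graph-theoretic bridge: $\mathbb{V}'$ separates $v_1$ from $q_i$ in $\mathbb{G}_{\mathbb{V}_q}$ iff it separates $v_1$ from every node of $\mathbb{V}_{q_i}$ in $\mathbb{G}$, using $\mathbb{V}'\cap\mathbb{V}_q=\emptyset$ and the lift/projection of paths. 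What each approach buys: the paper's is shorter, but it silently assumes Theorem \ref{stCSs} holds for \emph{any} partition of $\mathbb{V}_q$ into connected pieces, not just into the maximal components named in its statement (if the induced subgraph of $\mathbb{V}_q$ has edges, singletons are not its connected components); your version uses the theorem only in the form actually proved and supplies the missing translation, so it is the more rigorous derivation --- indeed your bridging equivalence is precisely what justifies the paper's reinstantiation. Two small polish points: the correspondence of edges at $q_i$ is many-to-one (several $\mathbb{G}$-edges from a vertex $\tilde v$ into $\mathbb{V}_{q_i}$ collapse to one compressed edge), not a bijection, though only adjacency matters for your argument; and your lift should explicitly splice internal $\mathbb{V}_{q_j}$-paths at \emph{intermediate} compressed vertices $q_j$ on the path, not only replace the final edge --- the internal-connectivity observation you state covers this, but the parenthetical as written handles only the endpoint.
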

\begin{proof}
Consider every single node in $\mathbb{V}_q$ as the node compressed from itself, the result is followed directly from Theorem \ref{stCSs}.
\end{proof}

\begin{remark}
In summary, the problem of preserving partial controllability is equivalently transformed into the problem of finding the the minimal $\langle s,t\rangle$ vertex cutsets of the compressed interconnection graph. For any graph $\mathbb{G}$, time complexity of the search algorithm for all the minimal $\langle s,t\rangle$ vertex cutsets is $O(|\mathbb{V}|+|\mathbb{E}|)$ \cite{Patvardhan}. Therefore, partial controllability preserving problem has a polynomial-time complexity algorithm for the solution.
\end{remark}

\section{Conclusions}

Non-fragility of MASs was investigated under undirected interconnection topologies. It was proved that there exist a set of edge weights to make the system SNF if and only if each follower is directly adjacent with the leader. The necessary and sufficient condition of $k$-WNF controllability is the minimal cutset(s) of the interconnection graph contains $k+1$ followers. For partial controllability, a group of nodes are partially controllable if and only if the corresponding rows of the Grammian matrix, as well as the controllability matrix, are linearly independent. The existence of edge weights preserving partial controllability is equivalent to that the removed node set does not form any $\langle s,t\rangle$ vertex cutset, where $s,t$ represent the leader agent and the followers in the controllable node group, respectively.

% if have a single appendix:
%\appendix[Proof of the Zonklar Equations]
% or
%\appendix  % for no appendix heading
% do not use \section anymore after \appendix, only \section*
% is possibly needed

% use appendices with more than one appendix
% then use \section to start each appendix
% you must declare a \section before using any
% \subsection or using \label (\appendices by itself
% starts a section numbered zero.)
%

%\appendices
%\section{Proof of the First Zonklar Equation}
%Appendix one text goes here.

% you can choose not to have a title for an appendix
% if you want by leaving the argument blank
%\section{}
%Appendix two text goes here.

% use section* for acknowledgement

% Can use something like this to put references on a page
% by themselves when using endfloat and the captionsoff option.
\ifCLASSOPTIONcaptionsoff
  \newpage
\fi

% trigger a \newpage just before the given reference
% number - used to balance the columns on the last page
% adjust value as needed - may need to be readjusted if
% the document is modified later
%\IEEEtriggeratref{8}
% The "triggered" command can be changed if desired:
%\IEEEtriggercmd{\enlargethispage{-5in}}

% references section

% can use a bibliography generated by BibTeX as a .bbl file
% BibTeX documentation can be easily obtained at:
% http://www.ctan.org/tex-archive/biblio/bibtex/contrib/doc/
% The IEEEtran BibTeX style support page is at:
% http://www.michaelshell.org/tex/ieeetran/bibtex/
%\bibliographystyle{IEEEtran}
% argument is your BibTeX string definitions and bibliography database(s)
%\bibliography{IEEEabrv,../bib/paper}

\begin{thebibliography}{1}
%
%\bibitem{guan2013}%6
%Y. Guan, Z. Ji, L. Zhang, and L. Wang, ``Decentralized stabilizability of multi-agent systems
%under fixed and switching topologies," {\it Syst. Control Lett.}, vol. 62, no. 5, pp. 438-446, 2013.
%
%\bibitem{guan2014}%7
%Y.  Guan, Z.  Ji, L.  Zhang, and L. Wang, ``Quadratic stabilisability of multi-agent systems
%under switching topologies," {\it Int. J. Control}, vol. 87, no. 12, pp. 2657-2668, 2014.
%
%\bibitem{tanner2004}%8
%H. Tanner, ``On the controllability of nearest neighbor interconnections," {\it
%  Proc. of the 43rd IEEE Conference on Decision and Control}, 2004, pp. 2467-2472.
%
%
%

\bibitem{Xiao10}
F. Xiao, L. Wang, and J. Chen, ``Partial state consensus for networks of second-order dynamic agents,'' {\it Systems \& Control Letters}, vol. 59, no. 12, pp. 775-781, 2010.

\bibitem{Wang10}
L. Wang and F. Xiao, ``Finite-time consensus problems for networks of dynamic agents,'' {\it IEEE Transactions on Automatic Control}, vol. 55, no. 4, pp. 950-955, 2010.

\bibitem{Xiao09}
F. Xiao, L. Wang, J. Chen, and Y. Gao, ``Finite-time formation control for multi-agent systems,'' {\it Automatica}, vol. 45, no. 11, pp. 2605-2611, 2009.

%\bibitem{Liu12}
%H. Liu, G. Xie, and L. Wang, ``Necessary and sufficient conditions for containment control of networked multi-agent systems,'' {\it Automatica}, vol. 48, no. 7, pp. 1415-1422, 2012.

\bibitem{Saber06}
R. Olfati-Saber, ``Flocking for multi-agent dynamic systems: Algorithms and theory,'' {\it IEEE Transactions on Automatic Control}, vol. 51, no. 3, pp. 401-420, 2006.

%\bibitem{Shi04}
%H. Shi, L. Wang, and T. Chu, ``Swarming behavior of multi-agent systems,'' {\it Journal of Control Theory and Applications}, vol. 2, no. 4, pp. 313-318, 2004.

%\bibitem{Kibangou}
%A. Kibangou and C. Commault, ``Observability in connected strongly regular graphs and distance regular graphs,'' {\it IEEE Transactions on Control of Network Systems}, vol. 1, no. 4, pp. 360-369, 2014.

\bibitem{Guan13}
Y. Guan, Z. Ji, L. Zhang, and L. Wang, ``Decentralized stabilizability of multi-agent systems under fixed and switching topologies,'' {\it Systems \& Control Letters}, vol. 62, no. 2, pp. 438-446, 2013.

\bibitem{Guan16}
Y. Guan, Z. Ji, L. Zhang, and L. Wang, ``Controllability of heterogeneous multi-agent systems under directed and weighted topology,'' {\it International Journal of Control}, vol. 89, no. 5, pp. 1009-1024, 2016.

\bibitem{Tanner04}
H. Tanner, ``On the controllability of nearest neighbor interconnections,'' {\it Proceedings of the 43rd IEEE Conference on Decision and Control}, 2004, 2467-2472.

%\bibitem{Ji16}
%Z. Ji, H. Yu, ``A new perspective to graphical characterization of multi-agent controllability,'' {\it IEEE Transactions on Cybernatics}, DOI: 10.1109/TCYB.2016.2549034.

\bibitem{Rahmani09}
A. Rahmani, M. Ji, M. Mesbahi, and M. Egerstedt, ``Controllability of multi-agent systems from a graph-theoretic perspective,'' {\it Siam Journal on Control \& Optimization}, vol. 48, no. 1, pp. 162-186, 2009.

%\bibitem{Ji09}
%Z. Ji, Z. Wang, H. Lin, and Z. Wang, ``Interconnnection topologies for multi-agent coordination under leader-follower framework,'' {\it Automatica}, vol. 45, no. 12, pp. 2857-2863, 2009.

%\bibitem{Ji10}
%Z. Ji, H. Lin, G. Feng, and X. Guo, ``Controllability structure decomposition for switched linear systems,'' {\it Transactions of the Institute of Measurement and Control}, vol. 32, no. 6, pp. 736-755, 2010.

\bibitem{Ji15}
Z. Ji, H. Lin, and H. Yu, ``Protocols design and uncontrollable topologies construction for multi-agent networks,'' {\it IEEE Transactions on Automatic Control}, vol. 60, no. 3, pp. 781-786, 2015.

\bibitem{Wang09}
L. Wang, F. Jiang, G. Xie, and Z. Ji, ``Controllability of multi-agent systems based on agreement protocols,'' {\it Science in China Series F: Information Science}, vol. 52, no. 11, pp. 2074-2088, 2009.

\bibitem{Zhao16}
B. Zhao, Y. Guan, and L. Wang, ``Controllability improvement for multi-agent systems: leader selection and weight adjustment,'' {\it International Journal of Control}, Online DOI: 10.1080/00207179.2016.1146970.

\bibitem{Parlangeli12}
G. Parlangeli and G. Notarstefano, ``On the reachability and observability of path and cycle graphs,'' {\it IEEE Transactions on Automatic Control}, vol. 57, no. 3, pp. 743-748, 2012.

\bibitem{Ji12}
Z. Ji, H. Lin, and H. Yu, ``Leaders in multi-agent controllability under consensus algorithm and tree topology,'' {\it Systems \& Control Letters}, vol. 61, no. 9, pp. 918-925, 2012.

\bibitem{Notarstefano13}
G. Notarstefano and G. Parlangeli, ``Controllability and observability of grid graphs via reduction and symmetries,'' {\it IEEE Transactions on Automatic Control}, vol. 58, no. 7, pp. 1719-1731, 2013.

%\bibitem{Ji100}
%Z. Ji, Z. Wang, H. Lin, and Z. Wang, ``Controllability of multi-agent systems with time-delay in state and switching topology,'' {\it International Journal of Control}, vol. 83, no. 3, pp. 100-121, 2010.

%\bibitem{Zhang14}
%S. Zhang, M. Cao, and M. Camlibel, ``Upper and lower bounds for controllable subspaces of networks of diffusively coupled agents,'' {\it IEEE Transactions on Automatic Control}, vol. 59, no. 3, pp. 371-386, 2014.

\bibitem{Zhang13}
S. Zhang, M. Cao, and M. Camlibel, ``Upper and lower bounds for controllable subspaces of networks of diffusively coupled agents,'' {\it IEEE Transactions on Automatic Control}, vol. 59, no. 3, pp. 745-750, 2014.

\bibitem{Goldin13}
D. Goldin and J. Raisch, ``On the weight controllability of consensus algorithms.'' {\it{European Control Conference}}, 2013, 233--238.
%\bibitem{Nabi}
%M. Nabi-Abdolyousefi and M. Mesbahi, ``On the controllability properties of circulant networks,'' {\it IEEE Transactions on Automatic Control}, vol. 58, no. 7, pp. 1719-1731, 2013.

%\bibitem{Luenberger}
%D. Luenberger, ``Canonical forms for linear multivariable systems,'' {\it IEEE Transactions on Automatic Control}, vol. 12, no. 3, pp. 290-293, 1967.

%\bibitem{Lin74}
%C. Lin, ``Structural controllability,'' {\it IEEE Transactions on Automatic Control}, vol. 19, no. 3, pp. 201-208, 1974.

\bibitem{Partovi10}
A. Partovi, H. Lin, and Z. Ji, ``Structural controllability of high order dynamic multi-agent systems,'' {\it IEEE conference on robotics automation and
mechatronics}, 2010, 327-332.

\bibitem{Zamani009}
M. Zamani and H. Lin, ``Structural controllability of multi-agent systems,'' {\it Proceedings of the 2009 American Control Conference}, 2009, 5743-5748.

\bibitem{Mayeda79}
H. Mayeda and T. Yamada, ``Strong structural controllability,'' {\it SIAM Journal on Control and Optimization}, vol. 17, no. 1, pp. 123-128, 1979.

\bibitem{Jafari11}
S. Jafari, A. Ajorlou, and A. G. Aghdam, ``Leader selection in multi-agent systems subject to partial failure,'' {\it Proceedings of the 2011 American Control Conference}, 2011, 5330-5335.

\bibitem{Rahimian13}
M. A. Rahimian and A. G. Aghdam, ``Structural controllability of multiagent networks: robustness against simultaneous failures,'' {\it Automatica}, vol. 49, no. 11, pp. 3149-3157, 2013.

\bibitem{Lou12}
Y. Lou and Y. Hong, ``Controllability analysis of multi-agent systems with directed and weighted interconnection,'' {\it International Journal of Control}, vol. 85, no. 1, pp. 1486-1496, 2012.

\bibitem{Wilson}
R. J. Wilson, ``Introduction to Graph Theory,'' {\it Journal of Animal Ecology}, 4th ed., Edinburgh Gate, Harlow, UK: Prentice Hall, 1996, ch 3.

\bibitem{Abel}
U. Abel and R. Bicker, ``Determination of all minimal cut-sets between a vertex pair in an undirected graph,'' {\it IEEE Transactions on Reliability}, vol. r-31, no. 2, pp. 167-171, 1982.

\bibitem{Patvardhan}
%C. Patvardhan, V. C. Prasad, and V. P. Pyara, ``Vertex cutsets of undirected graphs,'' {\it IEEE Transactions on Reliability}, vol. 44, no. 2, pp. 347-353, 1995.
W. N. Li, ``An efficient algorithm for computing a minimum node cutset from a vertex-disjoint path set for timing optimization,'' {\it ACM Symposium on Applied Computing}, 1995, 56-60.

\bibitem{Brain}
M. Brain, ``The structure of a dense natural ant population,'' {\it Journal of Animal Ecology}, vol. 21, no. 1, pp. 12-24, 1952.

\end{thebibliography}
%
% <OR> manually copy in the resultant .bbl file
% set second argument of \begin to the number of references
% (used to reserve space for the reference number labels box)

\end{document}